\newcommand {\spacedvijfplaatjes} [7] []
{
  \plaatjes
  {
    \subplaatje {#1} {#2}
    \subplaatje {#1} {#3}
    \subplaatje {#1} {#4}\\
    \hspace*{1.3cm}
    \subplaatje {#1} {#5}
    \hspace*{-0.5cm}
    \subplaatje {#1} {#6}
    \hfill{}
  } {#7} {#2+#3+#4+#5+#6}
}
\newcommand{\niceremark}[3]{\textcolor{red}{\textsc{#1 #2: }}\textcolor{blue}{\textsf{#3}}}
\newcommand{\mati}[2][says]{\niceremark{Mati}{#1}{#2}}
\newcommand{\maarten}[2][says]{\niceremark{Maarten}{#1}{#2}}
\newcommand{\rodrigo}[2][says]{\niceremark{Rodrigo}{#1}{#2}}
\newcommand{\mkmrm}  [1]{\ensuremath{\mathrm{#1}}\xspace}
\newcommand{\mkmbb}  [1]{\ensuremath{\mathbb{#1}}\xspace}
\newcommand{\mkmfrak}[1]{\ensuremath{\mathfrak{#1}}\xspace}
\newcommand {\R} {\mkmbb {R}}
\newcommand {\etal} {\textit {et al.}}
\newcommand {\eps} {\varepsilon}
\newcommand{\alp}{\mkmfrak{A}}
\newcommand{\sym}[1]{\mkmfrak{#1}}
\newcommand{\ply}{\Delta}
\newcommand{\shpa}{\mkmrm{s\hspace{-2pt}p}}
\newcommand{\cost}{\mkmrm{cost}}
\renewcommand{\paragraph}[1]{\medskip\noindent\textbf{#1.}}
\newcommand{\regions}{\ensuremath{\cal D}}
\newcommand{\arr}{\ensuremath{\cal A(\regions)}}
\newcommand{\res}{\ensuremath{r}}
\newcommand{\thk}{\ensuremath{t}}
\definecolor {infocolor} {rgb} {0.6,0.6,0.6}
\newtheorem{observation}{Observation}
\newtheorem{theorem}{Theorem}
\newtheorem{corollary}{Corollary}
\newtheorem{lemma}{Lemma}
\newenvironment {proof}{\textbf {Proof:}}{\hfill \ensuremath {\boxtimes}}
\title{On the Complexity of Barrier Resilience\\for Fat Regions and Bounded Ply\footnote{A preliminary version of this work appeared at the 9th International Symposium on Algorithms and Experiments for Sensor Systems, Wireless Networks and Distributed Robotics~\cite{klss-otcbrfr-13}.}}
\author{
  Matias Korman\footnote
  { Tohoku University, Japan.
  {\tt mati@dais.is.tohoku.ac.jp}. 
    }
    \and 
  Maarten L\"offler\footnote
  { Dept. of Computing and Information Sciences, Utrecht University, the Netherlands. 
    \texttt{m.loffler@uu.nl}
  }
   \and 
  Rodrigo I. Silveira\footnote{
    Dept. de Matem\`atiques, Universitat Polit\`ecnica de Catalunya, Spain. 
    \texttt{rodrigo.silveira@upc.edu}
  }
   \and 
  Darren Strash\footnote
  { Department of Computer Science, Colgate University, USA.
    \texttt{dstrash@cs.colgate.edu}
  }
}
\begin{document}

\date{}

\maketitle

\begin{abstract}
In the \emph {barrier resilience} problem (introduced by Kumar {\em et al.}, Wireless Networks 2007), we are given a collection of regions of the plane, acting as obstacles, and we would like to remove the minimum number of regions so that two fixed points can be connected without crossing any region. In this paper, we show that the problem is NP-hard when the collection only contains fat regions with bounded ply $\ply$ (even when they are axis-aligned rectangles of aspect ratio $1 : (1 + \eps)$). We also show that the problem is fixed-parameter tractable (FPT) for unit disks and for similarly-sized $\beta$-fat regions with bounded ply $\ply$ and $O(1)$ pairwise boundary intersections.
 We then use our FPT algorithm to construct an 
$(1+\eps)$-approximation algorithm that runs in
$O(2^{f(\ply, \eps,\beta)}n^5)$
time, where $f\in O(\frac{\ply^4\beta^8}{\eps^4}\log(\beta\ply/\eps))$.
\end{abstract}

\section {Introduction}
The \emph {barrier resilience} problem asks for the minimum number of spatial regions from a collection $\regions$ that need to be removed, such that two given points $p$ and $q$ are in the same connected component of the complement of the union of the remaining regions.
This problem was posed originally in 2005 by Kumar \etal~\cite {kla-bcws-05,kla-bcws-07}, motivated from sensor networks. In their formulation, the regions are unit disks (sensors) in some rectangular strip $B \subset \R^2$, where each sensor is able to detect movement inside its disk. The question is then how many sensors need to fail before an entity can move undetected from one side of the strip to the opposite one (that is, how \emph{resilient} to failure the sensor system is).
Kumar \etal\ present a polynomial time algorithm to compute the resilience in this case.
They also consider the case where the regions are disks in an annulus, but their approach cannot be used in that setting.

\subsection {Related Work}

Despite the seemingly small change from a rectangular strip to an annulus, the second problem still remains open, even for the case in which regions are unit disks in $\R^2$.
There has been partial progress towards settling the question:
Bereg and Kirkpatrick~\cite {bk-abrwsn-09} present a factor $5/3$-approximation algorithm for the unit disk case. This result was very recently improved to a 1.5-approximation by Chan and Kirkpatrick~\cite{ck-mpamcppaabr-14}.  On the negative side, Alt \etal~\cite {acgk-cpslsa-11}, Tseng and Kirkpatrick~\cite {tk-brsn-11}, and Yang~\cite[Section 5.1]{y-sppacgatm-12} independently showed that if the regions are line segments in $\R^2$, the problem is NP-hard. Tseng and Kirkpatrick~\cite{tk-brsn-11} also sketched how to extend their proof for the case in which the input consists of (translated and rotated) copies of a fixed square or ellipse.

The problem of covering barriers with sensors has received a lot of attention in the sensor network community (e.g.,~\cite {c-kbcm-12,cglw-ammsm-12,hclss-cebc-12}).
In the algorithms community, closely related problems involving region intersection graphs have also become quite popular.
Gibson~\etal~\cite {gkv-ipud-11} study a problem that is, in a sense, opposite of ours: given a set of points and disks separating them (i.e., every path between two points intersects some disk), compute the maximum number of disks one can remove while keeping the points separated. They present a constant-factor approximation algorithm for this problem. Later, Penninger and Vigan showed that the problem is NP-complete~\cite{pv-psiuudnp-13}.
Recently, Cabello and Giannopoulos~\cite{Cabello2016} gave a cubic-time algorithm for the case where only two points have to be kept separated, for barriers that are arbitrary connected curves (under some mild assumptions).


\subsection {Results}
We present constructive results for two natural restricted variants of the problem. In Section~\ref {sec:fpt} we show that the problem is fixed-parameter tractable on the resilience when the regions are unit disks. We then extend this approach to other shapes that resemble unit disks. This resemblance is measured with the following three restrictions: all regions are of similar size, region boundaries have $O(1)$ pairwise intersections, and the collection of regions have bounded {\em ply}~\cite {mttv-sspnng-92} (that is, no point of the plane is covered by too many sensors). Such restrictions are similar in spirit to previous results that bound the union complexity of fat (and non-fat) regions~\cite{deberg-2008,efrat-05,whitesides-zhao-1990}. Formal definitions of fatness, ply, and more detailed descriptions of our restrictions are given in Section~\ref{sec:fpt-fat}.
 In Section~\ref {sec:approx} we also show that
the FPT result can be used to obtain an approximation scheme. In particular, the constructive results apply to the original unit disk coverage setting when the collection of disks (or in general fat objects) has bounded ply.

As a complement to these algorithms, in Section~\ref {sec:np} we show that the problem is NP-hard even when the input is a collection of fat regions of arbitrary shape in $\R^2$. The result holds even if regions consist of axis-aligned rectangles of aspect ratio $1:1+\eps$ and $1+\eps:1$. 
%
 Our results rely on tools and techniques from both computational geometry and graph theory. 



\section{Preliminaries} 
We denote with $p$ and $q$ the points that need to be connected, and with $\regions$ the set of regions that represent the sensors. 
To simplify the presentation of our results, we make the following general position assumption: all intersections between boundaries of regions in $\regions$ consist of isolated points.
We say that a collection of objects in the plane are {\em pseudodisks} if the boundaries of any two of them intersect at most twice.

 We formally define the concepts of \emph{resilience} and \emph{thickness} introduced in~\cite{bk-abrwsn-09}. The \emph{resilience of a path} $\pi$ between two points $p$ and $q$, denoted $\res(\pi)$, is the number of regions of $\regions$ intersected by $\pi$.
Given two points $p$ and $q$, the \emph{resilience of $p$ and $q$}, denoted $\res(p,q)$, is the minimum resilience over all paths connecting $p$ and $q$.
In other words, the resilience between $p$ and $q$ is the minimum number of regions of $\regions$ that need to be removed to have a path between $p$ and $q$ that does not intersect any region of $\regions$. Note that sometimes we will assume that neither $p$ nor $q$ are contained in any region of $\regions$, since such regions must always be counted in the minimum resilience paths, hence we can ignore them (and update the resilience we obtain accordingly). 

Often it will be useful to refer to the arrangement (i.e., the subdivision of the plane into faces; see~\cite{bcko-aad-08} for a formal definition) induced by the regions of $\regions$, which we denote by $\arr$. 
Based on this arrangement we define a weighted dual graph $G_{\arr}$ as follows. 
There is one vertex for each face (i.e., 2-dimensional cell) of $\arr$. Each pair of neighboring cells $A, B$ is connected in $G_{\arr}$ by two directed edges, $(A,B)$ and $(B,A)$. The weight of an edge is $1$ if, when traversing from the starting cell to the destination one, we enter a region of $\regions$ (or $0$ if we leave a region\footnote{Note that no other option is possible under our general position assumption.}).
\eenplaatje
{graph-example}
{The graph $G_{\arr}$ for an arrangement of three disks. Solid edges have weight 1, dashed edges have weight 0.
}


The \emph{thickness} of a path $\pi$ between $p$ and $q$, denoted $\thk(\pi)$, equals the number of times $\pi$ enters a region of $\regions$ when traveling from $p$ to $q$ (possibly counting the same region multiple times).
Given two points $p$ and $q$, the \emph{thickness of $p$ and $q$}, denoted $\thk(p,q)$, 
is the value $|\shpa_{G_{\arr}} (p,q)|+\ply(p)$, where $\shpa_{G_{\arr}} (p,q)$ is a shortest path in $G_{\arr}$ from the cell of $p$ to the cell of $q$, and $\ply(p)$ equals the number of regions that contain $p$. Also note that the resilience (or thickness) between two points only depends on the cells to which the points belong. Hence, we can naturally extend the definitions of thickness to encompass two cells of \arr, or a cell and a point.
Unless otherwise stated, we will use $\rho$ to denote a path with minimum resilience, and $\tau$ for one of minimum thickness.

Note that thickness and resilience can be different (since entering the same region several times has no impact on the resilience, but is counted every time for the thickness). 
In fact, the thickness between two points can be efficiently computed in polynomial time using any shortest path algorithm for weighted graphs (for example, using Dijkstra's algorithm).  
 However, as we will see later, the thickness (and the associated shortest path) will help us find a path of low resilience. 

Throughout the paper we often use the following fundamental property of disks, already observed in~\cite{bk-abrwsn-09}.
In the statement below, ``well-separated'' is in the sense used in~\cite{bk-abrwsn-09}---that is, the distance between $p$ and $q$ is at least $2\sqrt{3}$.\footnote {Note that the well-separatedness of $p$ and $q$ is used to prove a factor 2 instead of 3. Everything still works for points that are not well-separated, at a slight increase of the constants. Our most general statements for $\beta$-fat regions do not make this requirement.}

\begin{lemma}[\cite{bk-abrwsn-09}, Lemma~1]
\label{lem:AtMostTwice}
Let $\regions$ be a set of unit disks, and let $\rho$ be a path from $p$ to $q$ of minimum resilience. 
If $p,q$ are well-separated, then $\rho$ encounters no disk of $\regions$ more than twice.
\end{lemma}

\begin{corollary}[\cite{bk-abrwsn-09}]
\label{cor:AtMostTwice}
When the regions of $\regions$ are unit disks,
the thickness between two well-separated points is at most twice their resilience.
\end{corollary}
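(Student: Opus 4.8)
The plan is to deduce the bound directly from Lemma~\ref{lem:AtMostTwice}, together with the elementary fact that $\thk(p,q)$ is a lower bound on the thickness of \emph{every} path connecting $p$ and $q$. Since we are in the unit-disk setting and $p,q$ are well-separated, Lemma~\ref{lem:AtMostTwice} is applicable.

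First I would fix an optimal resilience solution $S\subseteq\regions$, so that $|S|=\res(p,q)$, and take a path $\pi$ from $p$ to $q$ that meets no region of $\regions\setminus S$; such a path exists because $S$ is a solution, i.e.\ removing the regions of $S$ leaves $p$ and $q$ in the same connected component of the complement. After an arbitrarily small perturbation I may assume $\pi$ is in general position with respect to $\arr$. By Lemma~\ref{lem:AtMostTwice} the path $\pi$ meets each region of $S$ in at most two connected components (and it meets no other region), so the total number of connected components of $\pi\cap D$ over all regions $D$ is at most $2|S|=2\res(p,q)$. Since, by our standing assumption, neither $p$ nor $q$ lies in any region, this total is exactly $\thk(\pi)$; hence $\thk(\pi)\le 2\res(p,q)$.

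Second I would relate $\thk(\pi)$ to $\thk(p,q)$. Following $\pi$ from the cell of $p$ to the cell of $q$ traces a walk $W_\pi$ in $G_{\arr}$: each time $\pi$ crosses an edge of $\arr$ it either enters a region (a cost-$1$ edge of $G_{\arr}$) or leaves one (a cost-$0$ edge). Because $p$ and $q$ lie outside all regions, every maximal sub-arc of $\pi$ inside a region begins with exactly one entry crossing, so $\cost(W_\pi)$ equals the number of such sub-arcs, namely $\thk(\pi)$. Also $\ply(p)=0$, so $\thk(p,q)=|\shpa_{G_{\arr}}(p,q)|$. Since a shortest path in a graph with non-negative weights is no longer than any walk between the same endpoints, $\thk(p,q)=|\shpa_{G_{\arr}}(p,q)|\le\cost(W_\pi)=\thk(\pi)\le 2\res(p,q)$, which is the claim.

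The step I expect to require the most care is the identity $\cost(W_\pi)=\thk(\pi)$: it is here that the hypothesis that neither endpoint lies in a region is essential — if $p$ lay inside $k$ regions, each would contribute an initial, uncharged sub-arc of $\pi$ and one would instead get $\cost(W_\pi)=\thk(\pi)-k$, which is precisely why the definition of $\thk(p,q)$ carries the additive $\ply(p)$ term. Everything else is routine once Lemma~\ref{lem:AtMostTwice} is in hand.
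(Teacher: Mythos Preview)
Your argument is essentially the one the paper has in mind; the corollary is stated there without proof, as an immediate consequence of Lemma~\ref{lem:AtMostTwice}, and your second paragraph carefully spells out the routine step $\thk(p,q)\le\thk(\pi)$ that the paper leaves implicit.

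There is one imprecision worth fixing. Lemma~\ref{lem:AtMostTwice} does not assert that \emph{every} path $\pi$ avoiding $\regions\setminus S$ meets each disk of $S$ in at most two connected components---an arbitrary such path could wind back and forth through a single disk as many times as you like. What the lemma (in the spirit of Bereg--Kirkpatrick) guarantees is the existence of \emph{one} optimal path with this property. So instead of first picking ``a path $\pi$ \ldots\ such a path exists because $S$ is a solution'' and then invoking the lemma on it, you should let $\pi$ be the specific path furnished by Lemma~\ref{lem:AtMostTwice}. With that adjustment, the rest of your argument---the identification $\cost(W_\pi)=\thk(\pi)$ via entry crossings, and the shortest-path comparison in $G_{\arr}$---goes through exactly as written.
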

\tweeplaatjes {nounit-size} {nounit-ply} 
{(a) With $n$ unit disks and one arbitrarily large disk (orange), the optimal tour may be forced to enter and leave the same region up to $\Omega(n)$ times, even when the ply of the disks is at most $3$.
 (b) When we move the yellow disks closer together, the radius of the orange disk can be made arbitrarily close to $1$, at the cost of increasing the ply (i.e., having many disks covering the same point).
}
Note that a crucial property in the above results is that all disks have the same size. In Figure~\ref{fig:nounit-size} we show problem instances with a single large disk that has to be traversed a linear number of times in any minimum resilience path. The same instance is then modified in Figure~\ref{fig:nounit-ply} so that the radius of the larger disk is only $1+\eps$ times larger than the radius of the other disks (at the expense of concentrating all disks at the same point).



\section{Fixed-parameter tractability} \label {sec:fpt}
In this section we introduce a single-exponential fixed-parameter tractable (FPT) algorithm, where the parameter is the resilience of the optimal solution. Thus, our aim is to obtain an algorithm that given a problem instance, determines whether or not there is a path of resilience $r$ between $p$ and $q$, and runs in $O(2^{f(r)}n^c)$ time for some constant $c$ and some polynomial function $f$.

For clarity we first explain the algorithm for the special case of unit disks. Afterwards, in Section \ref{sec:fpt-fat}, we show how to adapt the solution to the case in which $\regions$ is a collection of $\beta$-fat objects. Note that for treating the case of unit disk regions we assume that  $p$ and $q$ are well-separated, so we can apply Lemma~\ref{lem:AtMostTwice}. This requirement is afterwards removed in Section~\ref{sec:fpt-fat}.



First we give a quick overview of the method of Kumar \etal~\cite {kla-bcws-05} for open belt regions. Their idea consists of considering the intersection graph of $\regions$ together with two additional artificial vertices $s_a$,$t_a$ with some predefined adjacencies. There is a path from the bottom side to the top side of the belt if and only if there is no path between $s_a$ and $t_a$ in the graph. Hence, computing the resilience of the network is equal to finding a minimum vertex cut between $s_a$ and $t_a$.

We start by giving a bird's-eye view of our algorithm. Let $\rho$ be a path of minimum resilience from $p$ to $q$, and let $\pi$ be any known path that starts at $p$, passes through $q$, and reaches an unbounded region. Assume that somehow we know that $\rho$ and $\pi$ do not cross (other than at $p$ and $q$). Then, we can cut open through $\pi$ effectively splitting the regions of $\arr$ traversed by the path into two. Topologically speaking, we get something that is homeomorphic to an open belt region, and thus we can solve the problem as such: construct the intersection graph, connect the split regions of $\arr$ to either of the artificial vertices depending on which side of the cut they lie in, and look for a minimum vertex cut (see Figure~\ref{fig:fig_idea}, left). Note that, when doing this cut, it is possible that a disk is split into more than one component. Whenever this happens, we must identify the portions as one (i.e., when one portion is entered, then entering the other portions of the same disk is {\em for free}).

\eenplaatje [scale=1]
{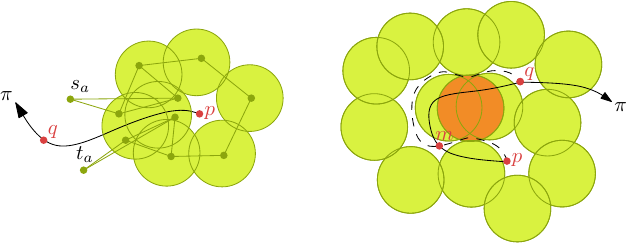}
{ (left) If we are given an infinitely long path $\pi$ (in black in the figure) that goes through $p$ and $q$, and is not crossed by $\rho$, we can cut open through it and obtain an open belt instance. The resulting graph (with the artificial vertices) is shown for clarity.
  (right) When the two paths intersect ($\rho$ denoted with a dashed path) we obtain several open belt problem instances. However, these problems are not independent, since the removal of the highlighted disk makes the paths from $p$ to $m$ and from $m$ to $q$ feasible.
}

Thus, the problem is easy once we have a path $\pi$ that does not cross with $\rho$. Unfortunately, finding such a path is difficult. Instead, we use several observations to compute a (possibly non-simple) path that cannot have many crossings with $\rho$, and guess where (if any) these crossings happen. Naturally, we don't know the way in which the two paths interact, but we will try all possibilities and return the one whose resulting resilience is smallest. A fixed crossing pattern decomposes $\rho$ into subpaths whose endpoints are in $\pi$ (see Figure~\ref{fig:fig_idea}, right). Although the subpaths are unknown, we can compute them via the usual open belt region approach. The main problem is that the different sub-problems are not independent (removing a single region may be useful for several subpaths). Thus, rather than finding a vertex cut that isolates the single source to the single sink, we are given a list of sources and sinks that need to be pairwise disconnected from each other. In the literature, this problem is known as the {\em vertex multicut problem}~\cite{x-sipamc-10}, and several FPT approaches are known. 



We now present some observations that will allow us to have a nice choice of $\pi$ (i.e., find a path in which the number of crossings with $\rho$ does not depend on $n$). Consider a minimum resilience path $\rho$ of shortest length between the cells containing $p$ and $q$ in $G_{\arr}$, and let $t$ be the number of disks traversed by $\rho$. Since $\rho$ has shortest length, it does not enter and leave the same region unless it helps reduce resilience. Since we assumed that $p$ is not contained in any region, $t$ is exactly the thickness of $p$ and $q$. We observe that cells with high thickness to $p$ or $q$ can be ignored when we look for low resilience paths.

\begin {lemma}
\label{lem:dist1.5}
The minimum resilience path $\rho$ between $p$ and $q$ cannot traverse cells whose thickness to $p$ or $q$ is larger than $1.5t$.
\end {lemma}
\begin {proof}
We argue about thickness to $p$; the argument with respect to $q$  is analogous. 
Let $\rho$ be a path of minimum resilience between $p$ and $q$, and let $r$ be the resilience of $\rho$. Also, let $\tau$ be a minimum-thickness path from $p$ to $q$. Recall that $\rho$ does not enter a disk more than twice, hence the thickness of $\rho$ is at most $2r\leq 2t$. Assume, for the sake of contradiction, that the thickness of some cell $C$ traversed by $\rho$ is greater than $1.5t$. Let $\rho_C$ be the portion of $\rho$ from $C$ to $q$. Since the thickness of $\rho$ from $p$ to $q$ is at most $2t$, the triangular inequality implies that the thickness of $\rho_C$ is less than $0.5t$.
 
Now, by concatenating $\tau$ and $\rho_C$. we would obtain a path that connects $p$ with $C$ whose thickness is less than $1.5t$, giving a contradiction with the thickness of cell $C$. 
\end {proof}

For simplicity in the exposition, we will also bound the region to consider (thus, we discard regions with very high resilience since they will not be traversed by $\rho$). Let $R$ be the union of the cells of the arrangement that have thickness from $p$ at most $1.5t$; we call $R$ the \emph {domain} of the problem. Observe that $R$ is connected, but need not be simple (see Figure~\ref{fig:fpt-path-Mati}).

For simplicity in the explanation, we add additional discs surrounding $\regions$ so as to make sure that the unbounded face has thickness more than $1.5t$. This does not affect the asymptotic behavior of our algorithm, but it removes the need of considering some degenerate situations. Note that the number of cells remaining in $R$ might still be quadratic, hence asymptotically speaking the instance size has not decreased (the purpose of this pruning will become clear later). 

\tweeplaatjes [scale=.65]
{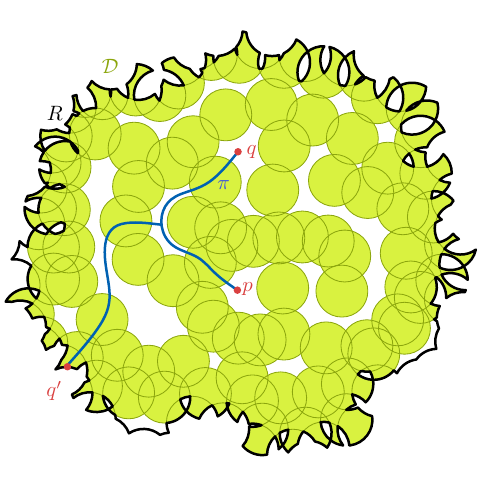}
{fpt-path}
{In order to transform our problem to one that resembles an open belt, we remove all cells of high thickness and cut through the tree formed by the union of two shortest paths. 
Figures (a) and (b) show two examples of the result.}

\tweeplaatjes [scale=.65]
{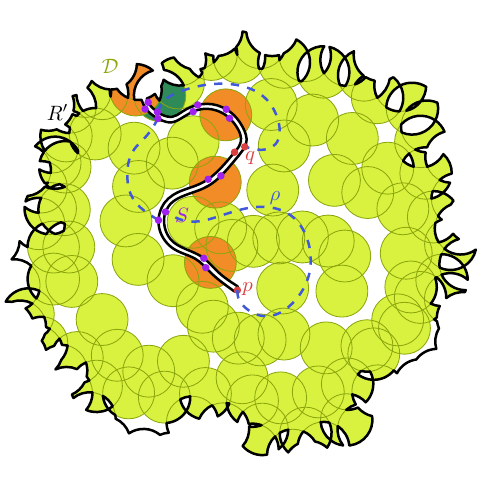}
{fpt-eatenModified}
{
  (a) Illustration of the situation in Figure~\ref{fig:fpt-path} after cutting along $\pi$.
  We get the domain $R'$, add a set $S$ of extra vertices on the boundary of $R'$, and end up with two copies of $q$. A crossing pattern, consisting of a topological path $\rho$ (defined by the sequence of points of $S$ it passes).
  The disks of $\regions$ intersected by $\pi$ are shown green if they are crossed by $\rho$, and orange otherwise.
  (b) Domain after removing the disks traversed by $\pi$ that are not crossed by $\rho$. 
  The green disk (shown transparent) is added to the solution, and thus ignored from now on.
}




\begin {lemma}\label{lem_pathtau}
There exists a point $q'$ on the outer boundary of $R$ and a tree that spans $p$, $q$, and $q'$ that has total thickness\footnote{The thickness of the tree is defined as the thickness of the paths that compose the tree.} $2.5t$.
\end {lemma}

\begin{proof}
Pick any point $q'$ in the outer boundary of $R$ and consider the tree obtained by joining the shortest paths from $q'$ to $p$, and $p$ to $q$. 
Note that the two paths may go through the same cell of $R$, see Figure~\ref{fig:fpt-path-Mati}.
The exact paths chosen are not important provided that they have no proper crossings. By definition, the thickness of each of these paths cannot exceed $1.5t$ and $t$, respectively, hence the lemma is shown.
%
\end{proof}

Let $\pi$ be the path from $q'$ to $q'$ that traverses the tree from the previous lemma.
We ``cut open'' through $\pi$, removing it from our domain. Note that cells that are traversed by $\pi$ are split into two copies (or three in the case of the cell containing $m$) of the same Jordan curve (See Figure~\ref{fig:fpt-pattern}). 

Consider now a minimum resilience path $\rho$, and let $r=\res(\rho)$ denote its resilience. This path can cross $\pi$ several times, and it can even coincide with $\pi$ in some parts (shared subpaths). Although we do not know how and where these crossings occur, we can \emph{guess} (i.e., try all possibilities) the topology of $\rho$ with respect to $\pi$. For each disk that $\pi$ passes through, we consider two cases: if $\rho$ goes through it, it will be part of the solution, and can be ignored from now on (increasing by one the total resilience). Otherwise, we make it an obstacle, removing it from the domain, see Figure~\ref{fig:fpt-eatenModified}. 
In that way we know the exact behavior of $\rho$ in the regions traversed by $\pi$. Additionally, we guess how many times $\rho$ and $\pi$ share part of their paths (either for a single crossing in one cell, or for a longer shared subpath). 
For each shared subpath, we guess from which cell $\rho$ arrives and leaves.

We call each such configuration a {\em crossing pattern} between $\pi$ and $\rho$. More formally, a single crossing is described by a tuple of four cells: the first cell $C$ that the two paths have in common for that crossing, the cell that $\rho$ visits right before entering $C$. Similarly, we add the last cell that the two paths have in common and the cell that is afterwards entered by $\rho$. A crossing pattern is described by a sorted list of all the crossings that $\pi$ and $\rho$ have. 

\begin{lemma}\label{lem_patterns}
For any problem instance $\regions$, there are at most $2^{4r\log r + o(r\log r)}$ crossing patterns between $\pi$ and $\rho$, where $r=\res(\rho)$.
\end{lemma}
\begin{proof}
First, for all disks in $\pi$, we guess whether or not they are also traversed by $\rho$. By Lemma~\ref{lem_pathtau}, $\pi$ has thickness at most $2.5t$, there are at most such many disks (hence up to $2^{2.5t}$ choices for which disks are traversed by $\rho$).

Now observe that $\pi$ cannot traverse many cells of $\arr$: when moving from a cell to an adjacent one, we either enter or leave a disk of $\regions$. Since we cannot leave a disk we have not entered and $\pi$ has thickness at most $2.5t$, we conclude that at most $5t$ cells will be traversed by $\pi$ (other than the starting and ending cells). 

We now bound the number of (maximal) shared subpaths between $\rho$ and $\pi$: recall that $\rho$ passes through exactly $r=\res(\rho)$ disks, and visits each disk at most twice. Hence, there cannot be more than $2 r$ shared subpaths. For each shared subpath we must pick two of the cells traversed in $\pi$ (as candidates for first and last cell in the subpath). By the previous observation there are at most $5t$ candidates for first and last cell (since that is the maximum number of cells traversed by $\pi$). Additionally, for each shared subpath we must determine from which side $\rho$ entered and left the subpath; in most cases we have two options for entering and leaving (since most cells are split into two by $\pi$). However, it could happen that the first, last (or even both cells) are the cell containing $m$. The cell containing $m$ was split into three, and thus we have three options on which part of the cell $\rho$ enters or leaves. That is, on the worst case there are three possibilities where $\rho$ enters and three possibilities where $\rho$ leaves the path, which gives a total of nine options overall. Since these choices are independent, in total we have at most $2r\times (5t \times 5t\times 9)^{2r}=101250^{r}\cdot t^{4r}r$ possibilities.

That is, in order to determine a crossing pattern, we must fix which disks of $\pi$ are traversed by $\rho$ as well as how many and where do the crossings between $\rho$ and $\pi$ happen. The bounds for each of these terms are $2^{2.5t}$ and $101250^{r}\cdot t^{4r}r$, respectively. Since these choices are independent, and using the fact that $t \leq 2r$, we obtain: 
\begin{eqnarray}
2^{2.5t} \cdot 101250^{r}\cdot t^{4r}r  &\leq &
2^{5r}\cdot 101250^{r}\cdot (2r)^{4r}r  \nonumber \\
& =& 2^{5r+r\log{101250} + 4r\log 2r+\log r} \nonumber \\
&=&2^{4r\log r + o(r\log r)}  \nonumber 
\end{eqnarray}
\end{proof}

Note that the bound is very loose, since most of the choices will lead to an invalid crossing pattern. 
However, the importance of the lemma is in the fact that the total number of crossing patterns only depends on $r$. 

Our FPT algorithm works by considering all possible crossing patterns, finding the optimal solution for a fixed crossing pattern, and returning the solution of smallest resilience. From now on, we assume that a given pattern has been fixed, and we want to obtain the path of smallest resilience that satisfies the given pattern. If no path exists, we simply discard it and associate infinite resilience to it.

\subsection{Solving the problem for a fixed crossing pattern}
Recall that the crossing pattern gives us information on how to deal with the disks traversed by $\pi$. Thus, we remove all cells of the arrangement that contain one or more disks that are forbidden to $\rho$. Similarly, we remove from $\regions$ the disks that $\rho$ must cross. After this removal, several cells of our domain may be merged. 

Since we do not use the geometry, we may represent our domain by a disk $W$
 (possibly with holes). After the transformation, each remaining region of $\regions$ becomes a pseudodisk, and $\rho$ becomes a collection of disjoint partial paths, each of which has its endpoints on the boundary of $W$ (see Figure~\ref{fig:fpt-schematic}), but is otherwise not yet fixed. To solve the subproblem associated with the crossing pattern  we must remove the minimum number of disks so that all partial paths are feasible.

\spacedvijfplaatjes [scale=.8] {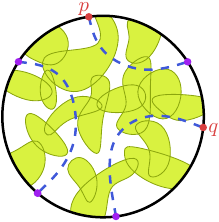} {fpt-intgraph} {fpt-forbidden} {fpt-solution} {fpt-translated2}
{ (a) We may schematically represent $W$ as a circle, since the geometry no longer plays a role. Partial paths are dashed (note that we do not know through which disks these paths will traverse).
  (b) The intersection graph of the regions after adding extra vertices for boundary pieces between points of $S\cup\{p,q\}$, shown in green.
  (c) The secondary graph $H$, representing the forbidden pairs. 
  (d) A possible solution of the vertex multicut problem (highlighted in orange).
  (e) The corresponding cut for the original problem. Once the orange disks have been removed, the endpoints of the partial paths belong to the same region, and thus we can connect them without entering any additional disk (solid paths).}

We consider the intersection graph $G_I$ between the remaining regions of $\regions$. That is, each vertex represents a region of $\regions$, and two vertices are adjacent if and only if their corresponding regions intersect. 
 Similarly to \cite{kla-bcws-05}, we must augment the graph with boundary vertices. The partial paths split the boundary of $R$ into several components. We add a vertex for each component (these vertices are called {\em boundary vertices}). We  connect each such vertex to vertices corresponding to pseudodisks that are adjacent to that piece of boundary (Figure~\ref{fig:fpt-intgraph}). Let $G_{\cal X}=(V_{\cal X}, E_{\cal X})$ be the resulting graph associated to crossing pattern ${\cal X}$. Note that no two boundary vertices are  adjacent.

We now create a secondary graph $H$ as follows: the vertices of $H$ are the boundary vertices of $G_{\cal X}$. We add an edge between two vertices if there is a partial path that separates the vertices in $G_{\cal X}$ (Figure~\ref{fig:fpt-forbidden}). 
Two vertices connected by an edge of $H$ are said to form a {\em forbidden pair} (each partial path that would create the edge is called a {\em witness} partial path). We first give a bound on the number of forbidden pairs that $H$ can have.

\begin{lemma}\label{lem_forbid}
Any crossing pattern has at most $2r^2+r$ forbidden pairs.
\end{lemma}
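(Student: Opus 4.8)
The plan is to reduce everything to counting boundary vertices. A forbidden pair is, by definition, an edge of the simple graph $H$ whose vertex set is the set of boundary vertices of $G_{\mathcal X}$; hence if there are $B$ boundary vertices, there are at most $\binom{B}{2}$ forbidden pairs. I will therefore aim to show $B \le 2r+1$, which yields the bound at once since $\binom{2r+1}{2} = 2r^2 + r$.

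To bound $B$, I first bound the number of partial paths. The partial paths are the maximal pieces into which $\rho$ is cut when $W$ is cut open along $\tau'$; two consecutive pieces are separated by a maximal subpath that $\rho$ shares with $\tau'$ (in the degenerate case, a single crossing point). As observed in the proof of Lemma~\ref{lem_patterns}, $\rho$ passes through exactly $r$ disks and visits each at most twice, so there are at most $2r$ such shared subpaths, hence at most $2r+1$ partial paths. Each partial path is a Jordan arc with both endpoints on $\partial W$, so the partial paths cut $\partial W$ into the boundary vertices.

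The real work is to show that this splitting produces at most $2r+1$ components rather than the naive $\sim 4r$. Here I would use the structure of the cut: every partial-path endpoint lies on one of the two copies of $\tau'$ created by the cut, except for the two terminal endpoints at $p$ and $q$; and between two consecutive partial paths along $\rho$ lies a subpath shared with $\tau'$, which after the cut is itself an arc of $\partial W$ bounded by two of these endpoints. Matching each crossing point of $\rho$ with $\tau'$ to the pair of copies it produces, and accounting for the two terminal endpoints, one checks that $\partial W$ has at most $2r+1$ connected components once the partial-path endpoints are removed, so $B \le 2r+1$, and the claim follows.

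I expect the main obstacle to be precisely this last counting step: $\tau'$ need not be a simple curve, and the removals prescribed by the crossing pattern $\mathcal X$ merge cells of the arrangement, so some care is needed to verify that the duplicated crossing points do not each spawn two genuinely distinct boundary components, and that the shared stretches collapse onto arcs that are already counted. Everything before and after this step is routine.
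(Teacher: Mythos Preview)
Your plan coincides with the paper's: bound the number $B$ of boundary vertices by $2r+1$ and conclude with $\binom{2r+1}{2}=2r^2+r$.

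The divergence is only in how you reach $B\le 2r+1$, and you make this step harder than necessary. The paper observes that the partial paths, being pieces of the single curve $\rho$, are pairwise \emph{non-crossing} arcs in the disk $W$; each such arc cuts exactly one existing region of $W$ into two and hence creates exactly one new boundary vertex (the first arc creates two). Mapping partial paths to the at most $2r$ crossings of $\rho$ with $\tau'$ (the bound from Lemma~\ref{lem_patterns}) then gives $B\le 2r+1$ immediately. You never invoke the non-crossing property; instead you count $\sim 4r$ partial-path endpoints on $\partial W$ and propose to collapse them back down to $2r+1$ via an ad hoc matching of duplicated crossing points and absorbed shared stretches. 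That route might be made to work, but the non-crossing cut-count is the clean one-line argument you are missing. (A related minor point: with your count of $2r+1$ partial paths rather than $2r$, the cut-count would yield $2r+2$ boundary vertices; the paper sidesteps this by identifying partial paths with the crossings themselves rather than with the pieces between them.)
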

\begin{proof}
By definition, $G_{\cal X}$ only adds edges between boundary vertices. Thus, it suffices to show that $G_{\cal X}$ has at most $2r+1$ boundary vertices. Since partial paths cannot cross, each such path creates a single cut of the domain. This cut introduces a single additional boundary vertex (except the first partial path that introduces two vertices). Recall that we can map the partial paths to crossings between paths $\pi$ and $\rho$ and, as argued in the proof of Lemma~\ref{lem_patterns}, these paths can cross at most $2r$ times. Thus, we conclude that there cannot be more than $2r+1$ boundary vertices. 
\end{proof}

The following lemma shows the relationship between the vertex multicut problem and the minimum resilience path for a fixed  pattern.

\begin{lemma}\label{lem_key}
There are $k$ vertices of $G_{\cal X}$ whose removal disconnects all forbidden pairs if and only if there are $k$ disks in $\cal D$ whose removal creates a path between $p$ and $q$ that obeys the crossing pattern $\cal X$.
\end{lemma}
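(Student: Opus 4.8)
The plan is to prove the two implications separately, each time through the duality used by Kumar \etal~\cite{kla-bcws-05}: in an arrangement of regions there is a free curve joining two designated pieces of boundary if and only if no connected chain of regions --- equivalently, no path in the region intersection graph augmented with boundary vertices --- joins those pieces. What we need here is the multicut version of that statement for the transformed instance: instead of one free curve we must route, simultaneously and disjointly, all the partial paths prescribed by $\cal X$. I work throughout in the transformed picture --- the topological disk $W$ (possibly with holes), the surviving pseudodisks, the non-crossing family of partial paths dictated by $\cal X$ with endpoints among the points of $S\cup\{p,q\}$ on $\partial W$, and the arcs into which those points cut $\partial W$, which are exactly the boundary vertices of $G_{\cal X}$. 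The boundary vertices are treated as undeletable terminals, so the multicut deletes only region vertices; and I read ``$\pi$ obeys $\cal X$'' as meaning that, after deleting from the picture the $\tau'$-disks that $\cal X$ forces $\rho$ to cross, $\pi\cap W$ is a disjoint family of arcs with the same endpoints on $\partial W$ as the partial paths of $\cal X$. Since such an arc has both endpoints on the outer boundary, it separates a given pair of boundary arcs exactly as the corresponding partial path of $\cal X$ does, irrespective of how it winds around holes.

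For the direction ``a good path yields a multicut'': suppose deleting the disks $D$ turns $\pi$ into a $p$--$q$ path obeying $\cal X$. I claim the region vertices $D$ disconnect every forbidden pair in $G_{\cal X}$. If not, some forbidden pair $(a,b)$ is still joined in $G_{\cal X}-D$, i.e.\ a chain of surviving pseudodisks runs from boundary arc $a$ to boundary arc $b$; the union of this chain is a connected subset of $W$ meeting $\partial W$ inside both $a$ and $b$. The partial path of $\cal X$ witnessing $(a,b)$ has both endpoints on $\partial W$ and separates $a$ from $b$; hence so does the corresponding arc of $\pi$, which must therefore meet the chain --- so $\pi$ meets a surviving pseudodisk, contradicting that $\pi$ avoids every surviving region.

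For the converse, ``a multicut yields a good path'': suppose a set $D$ of region vertices disconnects every forbidden pair of $G_{\cal X}$, and delete the corresponding disks. I route the partial paths one at a time in nesting order, outermost first. A single partial path $P$ has endpoints $x,x'$ on $\partial W$, cutting $\partial W$ into arcs $\alpha,\alpha'$; any simple arc from $x$ to $x'$ in $W$ separates $\alpha$ from $\alpha'$, so $P$ is realizable avoiding the surviving pseudodisks as soon as $x$ and $x'$ lie in the same component of $W$ minus those pseudodisks. This fails only if a chain of surviving pseudodisks separates $x$ from $x'$, which, tracing the chain's contacts with $\partial W$, means it joins a boundary arc lying in $\alpha$ to one lying in $\alpha'$; any such pair of boundary arcs is a forbidden pair (witnessed by $P$ itself), hence disconnected in $G_{\cal X}-D$, so no such chain exists and $P$ can be routed. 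Drawing $P$ splits $W$ into two sub-disks, each carrying a sub-family of the remaining non-crossing partial paths and a sub-collection of the surviving pseudodisks; the ``all forbidden pairs disconnected'' hypothesis still holds on each piece, so the induction goes through and the partial paths are routed pairwise disjointly. Concatenating these curves with the fixed behaviour along the two copies of $\tau'$ yields a $p$--$q$ path obeying $\cal X$ and crossing exactly the deleted disks.

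The hard part will be making the duality step of the converse fully rigorous once the geometry has been abstracted away: one must check that the cut/path duality of~\cite{kla-bcws-05} still applies to the abstract domain $W$ with regions that are only pseudodisks (this is precisely where the pseudodisk property is used), and that passing to the two sub-disks after routing $P$ genuinely preserves the hypothesis on each piece, so that the resulting curves come out pairwise disjoint. A secondary, bookkeeping point is to keep the parameter $k$ consistent across the equivalence: the $\tau'$-disks that $\cal X$ forces $\rho$ to cross were deleted before $G_{\cal X}$ was built, so they contribute a fixed additive shift between ``$k$ vertices of $G_{\cal X}$'' and ``$k$ disks to remove'' that has to be absorbed into the statement.
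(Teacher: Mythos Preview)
Your argument is correct and follows the same approach as the paper: both directions rest on the cut/path duality of Kumar \etal, with a surviving chain of regions between two boundary pieces blocking exactly the partial paths that witness that pair as forbidden. The paper's own proof is only a sketch that defers to~\cite{kla-bcws-05}; your write-up supplies the details it omits, in particular the inductive routing of the partial paths in nesting order for the converse direction, and your closing remark about the additive shift in $k$ from the $\tau'$-disks forced by $\cal X$ is a genuine imprecision in the lemma as stated --- the paper silently absorbs that shift later, in the proof of Theorem~\ref{thm:fpt}, rather than in the lemma itself.
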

\begin{proof}
Consider the regions of $\cal A(D)$ inside $R$
 that are not covered by any disk after the $k$ disks have been removed and let $R'$ be their union. By definition, 
there is a path between $p$ and $q$ with the fixed crossing pattern if all partial paths are feasible (i.e., there exists a path connecting the two endpoints that is totally within $R'$).
The reasoning for each partial path is analogous to the one used by Kumar {\em et al.}~\cite{kla-bcws-05}. If all partial paths are possible, then no forbidden pair can remain connected in $G_{\cal X}$, since---by definition---each forbidden pair disconnects at least one partial path (the witness path).
On the other hand, as soon as one forbidden pair remains connected, there must exist at least one partial path (the witness path) that crosses the forbidden pair. Thus if a forbidden path is not disconnected, there can be no path connecting $p$ and $q$ for that crossing pattern.
\end{proof}

Using Lemma \ref{lem_key}, we can transform the barrier resilience problem to the following one: given two graphs $G=(V,E)$, and  $H=(V,E')$ on the same vertex set, find a set $D \subset V$ of minimum size so that no pair $(u,v)\in E'$ is connected in $G\setminus D$. This problem is known as the (vertex) {\em multicut} problem~\cite{x-sipamc-10}. Although the problem  is known to be NP-hard if $|E'|>2$~\cite{h-mcnf-63}, there exist several FPT algorithms on the size of the cut and on the size of the set $E'$~\cite{m-pgsp-06,x-sipamc-10}. Among them, we distinguish the method of Xiao (\cite{x-sipamc-10}, Theorem 5) that solves the vertex multicut problem in roughly $O((2k)^{k+\ell/2}n^3)$ time, where $k$ is the number of vertices to delete, $\ell=|E'|$, and $n$ is the number of vertices of $G$. 

\begin{theorem}\label{thm:fpt}
Let $\regions$ be a collection of unit disks in $\R^2$, and let $p$ and $q$ be two well-separated points.
There exists an algorithm to test whether $\res(p,q) \leq r $, for any value $r$, and if so, to compute a path with that resilience, in $O(2^{f(r)}n^3)$ time, where $f(r)=r^2\log r+o(r^2\log r)$.
\end{theorem}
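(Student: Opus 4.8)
The plan is to assemble the pieces developed in this section into a single algorithm and then add up the running times. First I would compute the thickness $t$ of $p$ and $q$ by running a shortest-path computation in $G_{\arr}$; this takes polynomial time, and by Corollary~\ref{cor:AtMostTwice} we have $t\le 2r$. From this I would build the reduced domain $R$ (the cells of thickness at most $1.5t$ from $p$) and, by Lemma~\ref{lem_pathtau}, fix a path $\tau'$ from $p$ to a boundary point $q'$ via $q$ of thickness at most $1.5t$, cutting the domain open along $\tau'$. All of this is polynomial in $n$.

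Next I would enumerate all crossing patterns between $\tau'$ and a hypothetical minimum-resilience path $\rho$. By Lemma~\ref{lem_patterns} there are at most $2^{4r\log r+o(r\log r)}$ of them, a bound independent of $n$. For each fixed pattern $\mathcal X$, I would carry out the reduction described above: delete the cells containing disks forbidden to $\rho$, delete from $\regions$ the disks $\rho$ must cross, form the intersection graph $G_{\mathcal X}=(V_{\mathcal X},E_{\mathcal X})$ augmented with the boundary vertices, and build the secondary graph $H$ of forbidden pairs. By Lemma~\ref{lem_forbid}, $H$ has at most $\ell:=2r^2+r$ edges. By Lemma~\ref{lem_key}, the minimum resilience among paths obeying $\mathcal X$ equals the minimum size of a vertex multicut of the $\ell$ forbidden pairs in $G_{\mathcal X}$. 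I would solve this multicut instance with Xiao's algorithm (\cite{x-sipamc-10}, Theorem~5), which runs in roughly $O((2k)^{k+\ell/2}n^3)$ time when looking for a cut of size at most $k$; here we only care about solutions of size at most $r$, so we set $k\le r$ and $\ell\le 2r^2+r$, giving a running time of $O((2r)^{r+(2r^2+r)/2}n^3)=O(2^{r^2\log r+o(r^2\log r)}n^3)$ per pattern. Returning the smallest resilience over all patterns (and $\infty$ for patterns with no feasible path) gives the answer, together with the corresponding path.

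For the total running time I would multiply the number of patterns, $2^{4r\log r+o(r\log r)}$, by the per-pattern cost $2^{r^2\log r+o(r^2\log r)}n^3$; since $4r\log r$ is absorbed into the $o(r^2\log r)$ term, the product is $O(2^{f(r)}n^3)$ with $f(r)=r^2\log r+o(r^2\log r)$, as claimed. To test whether $\res(p,q)\le r$ one simply checks whether the value returned is at most $r$; since throughout we restricted the multicut search to solutions of size at most $r$, a negative answer is correct, and a positive answer comes with an explicit path of resilience at most $r$ (obtained by translating the multicut vertices back to the original regions, as in Fig.~\ref{fig:fpt-translated} and Fig.~\ref{fig:fpt-result}).

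The main obstacle, and the place I would be most careful, is the correctness glue rather than the arithmetic: verifying that \emph{some} crossing pattern is actually realized by an optimal $\rho$ (so that enumerating all of them loses nothing), and that for that pattern the multicut optimum really coincides with the restricted resilience — this is exactly the content of Lemmas~\ref{lem_patterns} and~\ref{lem_key}, and the theorem is essentially the statement that these fit together. A secondary subtlety is bookkeeping the disks that $\tau'$ forces $\rho$ to cross (each contributing $1$ to the resilience) so that the count returned is the true resilience and not an undercount; this is handled by the ``remove at a cost of $1$'' accounting in the crossing pattern. The use of Lemma~\ref{lem:AtMostTwice} (hence the well-separatedness hypothesis) enters only through $t\le 2r$, which is what keeps the number of patterns bounded in terms of $r$ alone.
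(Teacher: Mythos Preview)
Your proposal is correct and follows essentially the same approach as the paper's proof: enumerate the crossing patterns (bounded via Lemma~\ref{lem_patterns}), reduce each to a vertex multicut instance with $k\le r$ and $\ell\le 2r^2+r$ (via Lemmas~\ref{lem_forbid} and~\ref{lem_key}), solve each with Xiao's algorithm, and multiply the two bounds. Your write-up is in fact somewhat more explicit than the paper's about the preprocessing (computing $t$, building $R$ and $\tau'$) and about the bookkeeping of disks forced by the pattern, but the structure and the arithmetic are the same.
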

\begin{proof}
Recall that our algorithm considers all possible crossings between $\rho$ and $\pi$. For any fixed crossing pattern $\cal X$, our algorithm computes $G_{\cal X}$, and all associated forbidden pairs. We then execute Xiao's FPT algorithm~\cite{x-sipamc-10} for solving the vertex multicut problem. By Lemma~\ref{lem_key}, the number of removed vertices (plus the number of disks that were forced to be deleted by $\cal X$) will give the minimum resilience associated with ${\cal X}$. 

Regarding the running time, the most expensive part of the algorithm is running an instance of the vertex multicut problem for each possible crossing pattern. 
Observe that the parameters $k$ and $\ell$ of the vertex multicut problem are bounded by functions of $r$ as follows: $k \leq r$ and $\ell \leq 2r^2+r$ (the first claim is direct from the definition of resilience, and the second one follows from Lemma~\ref{lem_forbid}). Hence, a single instance of the vertex multicut problem will need $O((2r)^{r+(2r^2+r)/2}n^3)=O(2^{(1+\log r)(r^2+1.5r)}n^3)=O(2^{r^2\log r+o(r^2\log r)}n^3)$ time. By Lemma~\ref{lem_patterns} the number of crossing patterns is bounded by $2^{4r\log r+o(r\log r)}$. Thus, by multiplying both expressions we obtain the bound on the running time, and the theorem is shown. 
\end{proof}

We remark that the importance of this result lies in the fact that an FPT algorithm exists. Hence, although the dependency on $r$ is high, we emphasize that the bounds are rather loose. We also note that both the minimum resilience path and the disks to be deleted can be reported.

\subsection {Extension to Fat Regions}\label{sec:fpt-fat}

\eenplaatje {fat}
{A $\beta$-fat region $D$ is contained in a big disk, but contains a smaller disk; in this example, $\beta=2$.}

We now generalize the algorithm to consider more general shapes. A region $D$ is $\beta$-fat if there exist two concentric disks $C$ and $C'$ whose radii differ by at most a factor $\beta$, such that $C \subseteq D \subseteq C'$ (whenever the constant $\beta$ is not important, the region $D$ is simply called {\em fat}). Figure~\ref {fig:fat} shows an example of a $2$-fat region. However, for our algorithms, it is not sufficient for us to assume that the regions are fat. We impose three restrictions on our fat regions, which make them more like disks: (1) the collection of regions has bounded ply $\ply$, (2) all regions have similar size, allowing us to assume the radius of $C$ is $1$, and the radius of $C'$ is $\beta$, and (3) any two regions have $O(1)$ intersections between their boundaries. Together, these three restrictions ensure that no minimum resilience path traverses a given region more than a constant number of times, making thickness within a constant factor of resilience. 
 We formally describe each restriction, and illustrate how its removal impacts the path complexity.

\tweeplaatjes[scale=1.2] {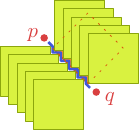}{assumptions-complexity}
{If we eliminate any one of our restrictions we can construct a problem instance whose minimum resilience path must leave and reenter the same (orange) region $\Theta(n)$ times. Here are constructions when removing one of our three restrictions: (a) bounded ply, and (b) bounded region complexity. Note that the case of distinct size was already discussed in Figures~\ref{fig:nounit-size} and~\ref{fig:nounit-ply}. 
}

\begin{description}
\item[Bounded ply]
The arrangement formed by a collection of regions $\regions$ is said to have bounded ply $\ply$ if no point $p\in \R^2$ is contained in more than $\ply$ elements of $\regions$. As we illustrate in Figure~\ref{fig:assumptions-ply}, we can place regions of similar size and bounded region complexity (but no bounded ply) forming a corridor. In particular, the minimum resilience path between $s$ and $t$ may be forced to leave and reenter another similarly-sized region $\Theta(n)$ times. Note that this construction is not possible for unit disks, and therefore unit disk instances do not require bounded ply; however, as soon as we allow a disk with larger radius (e.g., a disk of radius $1 + \epsilon$, $\epsilon > 0$), the bounded ply restriction is required.

\item[Similar size]
We assume without loss of generality that the radius of $C$ is $1$ and the radius of $C'$ is $\beta$; in this case we will call $D$ a \emph {$\beta$-fat unit region}. As previously shown in Figures~\ref{fig:nounit-size} and~\ref{fig:nounit-ply}, with the existence of a single larger region we can create a corridor of $\Theta(n)$ small interlocking regions with constant ply, and partially cover it with a large region to force the optimal resilience path to leave and reenter the large region $\Theta(n)$ times.

\item[Bounded region complexity]
Our final assumption is that the fat regions cannot be too complex. In particular, we assume that any two region boundaries have $O(1)$ pairwise intersections, ensuring that the intersection between any two regions has $O(1)$ connected components. As shown in Figure~\ref{fig:assumptions-complexity}, we can create a corridor with two regions that have $\Theta(n)$ pairwise boundary intersections with a third region, forcing the minimum resilience path to leave and reenter this third region $\Theta(n)$ times. Note that such complex regions can be formed, for example, by taking the union of $\Theta(n)$ circles with radius 1, with centers that are spaced $(\beta - 1)/n$ apart on a line.
\end{description}

Although these restrictions may seem excessive, previous results have made similar assumptions on input regions, and for the same reason we do here: worst-case configurations are possible even with the simplest inputs. For example, to bound the union complexity of fat $(\alpha,\beta)$-covered regions, Efrat~\cite{efrat-05} assumes constant \emph{algebraic complexity}--that region boundaries can be represented by $O(1)$ algebraic polynomials, implying that the region boundaries have at most $O(1)$ pairwise intersections. Whitesides and Zhao~\cite{whitesides-zhao-1990}, when defining \emph{$k$-admissible} curves, impose further restrictions on their (non-fat) regions, requiring the difference of any two regions to be connected, in order to guarantee linear-size union boundary (see also~\cite{aps-sou-08,PachS99} for alternative proofs of this result). Lastly, de Berg~\cite{deberg-2008} assumes constant density, which bounds the number of regions that can intersect any small disk, similar in spirit to ply.

To our knowledge, no definition of fatness meets any of our three assumptions. Fortunately, our assumptions are not overly restrictive. Indeed, they are representative of cases that we are likely to encounter in practice, as it is inefficient to place sensors so that many of them cover the same region, sensor ranges are typically of similar size, and limiting the boundary intersections encompasses both unit disks and pseudodisks as special cases.

The main workings of the algorithm remain unchanged. We start by extending Lemmas~\ref {lem:AtMostTwice}, \ref{lem:dist1.5}, \ref{lem_pathtau}, \ref{lem_patterns} and \ref{lem_forbid} to consider $\beta$-fat unit regions. 

\begin{lemma}
\label{lem:AtMostConstant}
Let $\regions$ be a set of $\beta$-fat unit regions forming an arrangement with ply $\ply$, and bounded region complexity. Let $S \subset \regions$ be an optimal solution. 
In the sequence of regions of $S$ found when going from $p$ to $q$ in an optimal way, no region of $S$ appears more than $O((2\beta+1)^2\ply)$ times.
\end{lemma}
\eenplaatje {fatvisit}{Example showing we can have $\Theta(\beta^2)$ pairwise disjoint $\beta$-fat regions (in orange) that intersect a fixed region (green). By placing a constant number of regions, we can force the minimum resilience path to follow around the boundary of the green region, causing it to enter and leave $\Omega(\beta^2)$ times. This construction has overall constant ply, so we can repeat it until we reach the maximum ply $\Delta$ and get the $\Omega(\ply\beta^2)$ lower bound}

\begin {proof}
  Let $D$ be a region in $S$, and consider its containing disk $C'$ with center $c$. Analogously to the original argument by Bereg and Kirkpatrick~\cite{bk-abrwsn-09}, we note that every time the optimal path visits and leaves $D$, it must do so to avoid some other region. This other region must intersect $D$, and since it is $\beta$-fat unit, it must contain a unit disk centered at distance at most $\beta$ from $D$. 

Therefore all regions intersecting $D$ have their unit-disks centered at distance at most $2\beta$ from $c$. In particular, their unit-disks are totally contained in a disk of radius $2\beta + 1$ centered at $c$. A simple area argument shows that at most $(2\beta+1)^2$ disjoint unit-disks fit into a disk of radius $(2\beta+1)$. Since the ply is bounded by $\ply$, overall there can be up to $\ply(2\beta+1)^2$ regions intersecting $D$. Recall that, by our fatness assumption, two regions can intersect only in $O(1)$ connected components. Therefore, the number of times an optimal path can reenter region $D$ is, proportional to the number of other regions that intersect $D$ which is bounded by $\ply(2\beta+1)^2$. 
\end {proof}

We note that our bound is asymptotically tight. Figure~\ref {fig:fatvisit} illustrates how a matching lower bound. 

\begin{corollary}
\label{cor:AtMostConstant}
When the regions of  $\regions$ are $\beta$-fat unit regions forming an arrangement with ply $\ply$, and bounded region complexity, the thickness between two points is at most $\ply(2\beta+1)^2$ times their resilience.
\end{corollary}

This change in the upper bound of the thickness in terms of the resilience implies similar changes in Lemmas~\ref{lem:dist1.5}, \ref{lem_pathtau}, \ref{lem_patterns} and \ref{lem_forbid}.
The following lemmas summarize these changes; they are proved in the same way as their counterparts for disks, thus we only sketch the differences with the original proofs (if any).

\begin {lemma}
When the regions of  $\regions$ are $\beta$-fat unit regions forming an arrangement with ply $\ply$, and bounded region complexity, the minimum resilience path between $p$ and $q$ cannot traverse cells whose thickness to $p$ or $q$ is larger than $(1+\ply(2\beta+1)^2)\frac{t}{2}$.
\end {lemma}
\begin{proof}
We use the same reasoning as in the proof of Lemma~\ref{lem:dist1.5}. On the one hand there is the minimum thickness path between $p$ and $q$, whose thickness is $t$. 
On the other hand, we also have the minimum resilience path $\rho$ between the same points, whose thickness is at most $\ply(2\beta+1)^2 t$ by Corollary~\ref{cor:AtMostConstant}. Assume now that any cell $C$ traversed by $\rho$ has thickness $k \ply(2\beta+1)^2 t$ from $p$, for some $0<k<1$. The alternative path goes from $p$ to $C$, via $q$, and its thickness is at most $(1-k)\ply(2\beta+1)^2 t + t$. The bound we need is obtained for the value of $k$ that makes both expressions equal, which is  $k=\frac{1}{2}+\frac{1}{2\ply(2\beta+1)^2}$, leading to the claimed value.
\end{proof}

Thus, for $\beta$-fat objects our domain $R$ now becomes be the union of the cells of the arrangement that have thickness from $p$ at most $(1+\ply(2\beta+1)^2)\frac{t}{2}$.

\begin {lemma}
There exists a (possibly non-simple) path $\pi$ whose thickness is at most $(3+\ply(2\beta+1)^2)\frac{t}{2}$, that connects $q$ to a point $q'$ on the outer boundary of $R$ and passes through $p$.
\end {lemma}

\begin{lemma}\label{lem_crosspatbeta}
For any problem instance $\regions$, there are at most $2^{O(\ply^2\beta^4r + \ply\beta^2 r \log ({\ply\beta r}))}$ crossing patterns between $\pi$ and $\rho$. 
\end{lemma}
\begin{proof}
Let $\mu=\ply(2\beta+1)^2$ and  $\nu=\frac{3+\ply(2\beta+1)^2}{2}$. We proceed as in the proof of Lemma~\ref{lem_patterns}. Recall that previously we had $2^{2.5t}\times 2r \cdot (5t \times 5t \times 9)^{2r}$ crossing patterns, but now we must use the bounds that depend on $\beta$ instead. What before was $2r$ now becomes $\mu r$, and the $2.5t$ terms now become $\nu t$. Making these changes in the previous expression, we obtain that the number of crossings is bounded by

$$
2^{\nu t}\times \mu r\times (2 \nu t \times 2\nu t\times 9)^{\mu r}.
$$

Since $t \leq \mu r$ (and by simplifying the expression), this is upper bounded by


$$
2^{\nu \mu r}\times \mu r \times (6\nu \mu r)^{2\mu r}
=
2^{\nu \mu r+\log (\mu r)+2\mu r\log (6\nu \mu r)}.
$$

Finally, we apply that both $\mu,\nu\in O(\ply\beta^2)$, and obtain the desired bound.
%
%
\end{proof}

\begin{lemma}\label{lem_xpattbeta}
Any crossing pattern has at most $O (\ply^2\beta^4 r^2)$ forbidden pairs.
\end{lemma}
\begin{proof}
As in the unit disc case, each crossing between $\pi$ and $\rho$ creates an additional vertex in the boundary (i.e., a potential vertex of $H$). Further note that $\pi$ and $\rho$ can cross at most $2\mu r$ times (since they traverse through at most that many cells of $\arr$). A bound on the number of vertices of $H$ immediately implies a quadratic bound on the number of edges in $H$ as well. Thus, we obtain that the number of forbidden pairs is at most $O((2\mu r+1)^2)=O(\ply^2\beta^4r^2)$ as claimed.
\end{proof}

With these results in place, the rest of the algorithm remains unchanged: the only additional property of unit disks that we use is the fact that they are connected, to be able to phrase the problem as a vertex cut in the region intersection graph.

\begin{theorem}\label{thm:fpt-fat}
Let $\regions$ be a collection of $n$ connected $\beta$-fat unit regions of  bounded region complexity in $\R^2$ forming an arrangement of ply $\ply$, and let $p$ and $q$ be two points. Let $r$ be a parameter.
There exists an algorithm to test whether $\res(p,q) \leq r $, and if so, to compute a path with that resilience, in $O (2^{f(\ply,\beta,r)}n^3)$ time, where $f(\ply,\beta,r)\in O(\ply^2\beta^4r^2\log(\ply \beta r))$.
\end{theorem}

\begin{proof}
As before, the running time is bounded by the product of the number of crossing patterns and the time needed to solve a single instance of the vertex multicut problem. By Lemmas~\ref{lem_crosspatbeta} and~\ref{lem_xpattbeta}, these bounds now become $O(2^{O(\ply^2\beta^4r + \ply\beta^2 r \log ({\ply \beta r}))})$ and $O(2^{O(\ply^2\beta^4r^2\log(\ply \beta r))}n^3)$, respectively. The product of both is dominated by the second term, hence the theorem is shown.
\end{proof}

\section{$(1+\eps)$-approximation} \label {sec:approx}

In this section we present an efficient polynomial-time approximation scheme (EPTAS) for computing the resilience of an arrangement of disks of bounded ply $\ply$. The general idea of the algorithm is very simple: first, we compute all pairs of regions 
that can be reached by removing at most $k$ disks,
for $k = \lceil (16\ply-12) / \eps^2 \rceil$.
 Then, we compute a shortest path in the dual graph of the arrangement of regions, augmented with some extra edges. We prove that the length of the resulting path is a $(1+\eps)$-approximation of the resilience.

As in the previous section, we first consider the case in which $\regions$ is a set of $n$ unit disks in $\R^2$ (note that this time we have the additional constraint that no point is covered in more than $\ply$ disks). 
 Let $\arr$ be the arrangement induced by the regions of $\regions$,
and let $G_{\arr}$ be the dual graph of $\arr$.
Recall that $G_{\arr}$ has a vertex for every cell of $\arr$, and a directed edge between all pairs of adjacent cells of cost $1$ when entering a disk, and cost $0$ when leaving a disk.
For any given $k$, let $G_k$ be the graph obtained from $G_{\arr}$ by adding,
for each pair of cells $A, B \in \arr$ with resilience at most $k$, a \emph{shortcut edge} $\overrightarrow {AB}$ of cost $\res (A,B)$. 

For a pair of cells of $\arr$, we can test whether $r(A,B)$ is smaller than $k$, and if so, compute it, in $O(2^{f(k)}n^3)$ time (where $f(k)=k^2\log k+o(k^2\log k)$) by applying Theorem~\ref {thm:fpt} to a point $p \in A$ and a point $q \in B$. Since the number of pairs of cells of the arrangement is also bounded by a polynomial in $n$, we overall get a EPTAS since $k$ is a constant that depends only on $\eps$ and $\ply$. Again, we emphasize that the bounds presented in this section are not tight, but our objective is to show the existence of an EPTAS for this problem.


\subsection {Analysis}


\begin{lemma}
\label{lem:ResInside}
Let $D \in \regions$, where \arr\ has ply $\ply$, and let $s$,$t$ be any two points inside $D$. 
Then the resilience between $s$ and $t$ in \regions\ is at most $4\ply-3$.
\end{lemma}
\begin{proof}
Let $c$ be the number of disks that contain $s$ or $t$ (or both). Clearly these disks must be removed. Also notice that $c \leq 2\ply-1$, since $D$ contains both points and no point is contained in more than $\ply$ disks. Now we analyze how many other disks may need to be removed too.

Consider a minimum resilience path between $s$ and $t$ among those that stay inside $D$. For each disk $D_1$ (not containing neither $s$ nor $t$) that needs to be removed in an optimal solution, there must be another disk $D_2$ that intersects $D_1$, so that $D_1$ and $D_2$ together separate $s$ and $t$ inside $D$. We call such a pair of disks a \emph{separating pair}. Thus if the resilience is $(c+c')$, there must be at least $c'$ \emph{disjoint}\footnote{By \emph{disjoint} we refer to the identities of the disks, not to the regions they occupy.} separating pairs  intersecting $D$. 
Let $a$ and $b$ be the diametral pair on $D$ that is orthogonal to segment $st$. We claim that one of the disks of any separating pair must cover either $a$ or $b$. Indeed, assume on the contrary that there exists two unit disks $D_1$ and $D_2$ that separate $s$ and $t$ but do not contain neither $a$ nor $b$ (nor $s$ or $t$). Without loss of generality, we may assume that both $s$ and $t$ lie on the boundary of $D$. Observe that in order to separate $s$ and $t$, the union of $D_1$ and $D_2$ must cross segment $st$ and cannot cross segment $ab$ (otherwise it would contain $s$ or $t$, since $D$, $D_1$ and $D_2$ are unit disks). However, the only possible way of doing so is if $D_1$ and $D_2$ are tangent to $a$, $b$ and either $s$ or $t$ (see Figure~\ref{fig:approx-resply3}). However, in this case $s$ and $t$ are not separated, a contradiction.

That is, for each separating pair we have a unique disk that covers either $a$ or $b$. Since no point is contained in more than $\ply$ disks (and $D$ contains both $a$ and $b$ we conclude that there cannot be more than $2(\ply-1)$ separating pairs, completing the proof of the lemma. 

\end{proof}
\eenplaatje [scale=0.8] {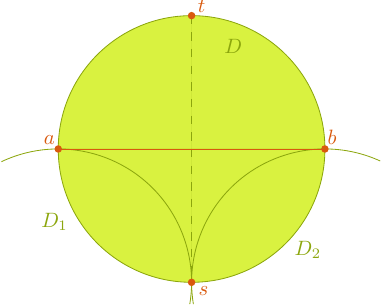}{Barring symmetric configurations, this is the only way of making two disks that cross the segment $st$ (dashed line) and avoids the segment $ab$ (solid line). However, in this case the disks $D_1$ and $D_2$ do not separate $s$ and $t$.}


The previous lemma implies that in an optimal resilience path, if a disk appears twice, the two entry points have resilience at most $4\ply-3$ apart (when counting the cells traversed by the path between the two occurrences of the disk). Note that a lower bound of $\ply$ is also easy to construct, so the result is (asymptotically speaking) tight.

To prove the result in this section it will be convenient to focus on the sequence of disks encountered by a path when going from $p$ to $q$.
It turns out that such problem is essentially a string problem, where each symbol represents a disk encountered by the path. 
In that context, the thickness will be equivalent to the number of symbols of the string (recall that we assume that $p$ is not contained in any disk), and the resilience to the number of distinct symbols.




Let $S = \langle s_1 \ldots s_n \rangle$ be a string of $n$ symbols from some alphabet $\alp$, such that no symbol appears more than twice.
Let $T$ be a substring of $S$.
We define $\ell (T)$ to be the length of $T$, and $d (T)$ to be the number of distinct symbols in $T$. Clearly, $\frac12 \ell (T) \le d (T) \le \ell (T)$.
Let $\sigma$ and $k$ be two fixed integers such that $\sigma < k$.
We define the \emph {cost} of a substring $T$ of $S$ to be:
\[ 
\psi(T) = 
\begin{cases}
  \sigma  & \text{ if } T = \langle \sym a\tau \sym a \rangle \text{ for some } \sym a \in \alp \text{, string } \tau \text{ s.t. } \sym a\not\in\tau, \text{ and }\ell(T) > \sigma, \\
  d(T)    & \text{ if } \ell(T) \le k \textrm{ (and the first condition fails), }\\
  \ell(T) & \text{ otherwise.} \\
\end{cases}
\]

Note that, in the string context, $d$ acts as the resilience, $\ell$ as the thickness, and $\psi$ is the approximation we compute. Intuitively, if $T$ is short (i.e., length at most $k$) we can compute the exact value $d(T)$. If $T$ has a symbol whose two appearances are far away we will use a ``shortcut'' and pay $\sigma$ (i.e., for unit disk regions, by Lemma~\ref{lem:ResInside}, we have $\sigma=4\ply-3$). Otherwise, we will approximate $d$ by $\ell$. 

Given a long string, we wish to subdivide $S$ into a \emph{segmentation} $\cal T$, composed of $m$ disjoint segments (i.e., substrings of $S$) $T_1, \ldots, T_m$, that minimize the total cost $\psi(\mathcal T) = \sum_i \psi(T_i)$.
Clearly, $\psi(\mathcal T) \le \ell(S)$.

\begin {lemma}
\label{lem:segmentation}
  Let $S$ be a sequence.
  There exists a segmentation $\mathcal T$ such that 
  $\psi (\mathcal T) \le (1 + \eps) d(S)$, where
  $\eps = 2\sqrt {\sigma / k}$.
\end {lemma}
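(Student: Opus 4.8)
The plan is to build the segmentation greedily from left to right, at each step consuming as long a prefix of the remaining string as we can while keeping the cost of the newly created segment under control. Concretely, starting at the current position, I would look ahead: if the next $\sigma$ symbols all happen to be distinct \emph{and} within the next window there is a symbol whose second occurrence lies far ahead (more than $\sigma$ positions), I would try to cut out a ``shortcut segment'' of the form $\langle \sym a \lambda \sym a\rangle$ paying only $\sigma$; otherwise I would take a block of exactly $k$ symbols (or the rest of $S$ if fewer than $k$ remain) and pay its $\ell$-value $k$ — except that a final short block is charged by $d$, which only helps. The point of cutting in blocks of size $k$ is that each such block, by the hypothesis that no symbol appears more than twice, contains at least $k/2$ \emph{distinct} symbols, so the ``wasted'' charge (the difference between $\ell = k$ and the number of distinct symbols we could have hoped to pay) is at most $k/2$ per block, and — this is the key accounting trick — at least $k/2$ of the symbols in that block are being ``used up'' permanently.

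The core of the argument is a charging/amortization scheme. Let $\mathcal T$ be the segmentation produced. I would partition the segments into three types: (i) short segments, charged by $d$, where $\psi$ is already optimal; (ii) shortcut segments $\langle \sym a\lambda\sym a\rangle$, charged $\sigma$; (iii) long blocks of length exactly $k$, charged $\ell = k$. For type (iii), I would argue that the overpayment $k - d(T_i) \le k/2$, and that we can \emph{amortize} this overpayment against the $\ge k/2$ distinct symbols appearing (for the first time, or being closed off) inside $T_i$; since $S$ has $d(S)$ distinct symbols total, and long blocks are disjoint, the number of long blocks is at most $d(S)/(k/2) = 2d(S)/k$, giving a total overpayment from type (iii) of at most $(2d(S)/k)\cdot(k/2) = d(S)$ — which is too weak by itself, so the finer point is that a long block only needs to be invoked when a shortcut was \emph{not} available, i.e., when the symbols in the current window are densely paired locally, and then $d(T_i)$ is close to $k/2$ but the \emph{next} block will reuse few of them; balancing these forces the window size choice. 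The total cost is then bounded by $d(S)$ (the ideal) plus the accumulated overpayment, and one chooses the block length to make the overpayment $\le \eps\, d(S)$ with $\eps = 2\sqrt{\sigma/k}$.

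The cleanest way to hit exactly $\eps = 2\sqrt{\sigma/k}$ is, I expect, not to use a fixed block length $k$ everywhere but to use blocks of length $\sqrt{\sigma k}$: a block of length $L = \sqrt{\sigma k}$ is still $\le k$, so $\psi$ of it equals $d$ of it, \emph{unless} it contains a far-apart repeated symbol — and if it does, we should have taken a shortcut instead. So the real dichotomy is: either the current length-$L$ window is ``shortcut-able'' (pay $\sigma$ and jump past the second copy of $\sym a$, covering $> \sigma$ symbols for cost $\sigma$, i.e.\ ratio $< 1$), or it is ``locally dense'' — every repeated symbol in it has both copies inside the window — in which case the window has $\ge L/2$ distinct symbols among its $L$, none of which can recur outside a bounded neighborhood, and we pay $d \le L$ for it, an overpayment of at most $L/2 = \frac12\sqrt{\sigma k}$ per dense block against $\ge L/2$ symbols permanently consumed. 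Summing: at most $2d(S)/L$ dense blocks, total overpayment $\le (2d(S)/L)\cdot(L/2)\cdot(\text{something}) $; matching this to $\eps\, d(S)$ pins $\eps = \Theta(\sqrt{\sigma/k})$, and tracking the constant carefully yields the factor $2$.

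The main obstacle I anticipate is the amortization bookkeeping for the ``dense'' blocks: one has to be careful that a distinct symbol is not charged twice (it could straddle a block boundary, having one copy in one dense block and the other copy in the next), and that shortcut segments and dense blocks interleave without double-counting. I would handle the straddling issue by noting that a symbol straddling a boundary between two dense blocks has its two copies within $L$ of each other \emph{across} the cut, which — if $L < \sigma$, which holds since $L = \sqrt{\sigma k}$ and $\sigma < k$ — means it is a ``local'' repeat whose overpayment contribution can be assigned to just one of the two blocks by a fixed convention (say, the left one). Once that convention is fixed, each distinct symbol of $S$ is charged at most once, the count of dense blocks is $\le 2d(S)/L + 1$, and the arithmetic closes. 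Everything else — the greedy construction terminating, $\psi(\mathcal T) \le \ell(S)$, the short-segment case — is routine.
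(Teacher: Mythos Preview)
You have correctly identified that an intermediate length scale $\sqrt{\sigma k}$ is the key to the argument, but the role this scale plays is not what you propose, and your accounting does not close.

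First, a confusion about the cost function: a block of length exactly $k$ is charged $d(T)$, not $\ell(T)$, by the second clause in the definition of $\psi$; so your ``type~(iii)'' segments paying $\ell=k$ do not arise as described.

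In the paper's construction, the intermediate value $\lambda = \lceil\sqrt{\sigma k}\rceil$ is used as a \emph{threshold}, not as a block length. One first greedily extracts a maximal disjoint collection of shortcut segments $\langle\sym a\ldots\sym a\rangle$ of length exceeding $\lambda$; then the remaining pieces are cut into blocks of length $k$, each charged by $d$. The crucial observation is that after the first pass, every symbol still appearing twice has its two copies within distance $\lambda$ of each other; hence in a block of length $k$ at most $\lambda$ symbols can have their partner in the adjacent block, so the fraction of double-counted symbols is at most $\lambda/k$. Separately, each shortcut segment costs $\sigma$ against at least $\lambda$ symbols it covers, contributing at most $\sigma/\lambda$ per symbol. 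Summing gives $(1 + \lambda/k + \sigma/\lambda)\,d(S)$, and equating the two error terms yields $\lambda=\sqrt{\sigma k}$ and $\eps = 2\sqrt{\sigma/k}$.

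Your analysis misidentifies the source of overpayment. For a block charged by $d$, you pay exactly one unit per distinct symbol appearing in that block; there is no ``wasted charge'' internal to a block. The overpayment comes entirely from a symbol being counted once in each of two different blocks, and it is this double-counting you must bound. Your attempt to amortize against ``$\ge L/2$ symbols permanently consumed'' does not work as stated, because nothing in your construction prevents most of those symbols from recurring in the next block --- and indeed your own arithmetic stalls at ``$(2d(S)/L)\cdot(L/2)\cdot(\text{something})$''. You also assert that $L = \sqrt{\sigma k} < \sigma$; since $\sigma < k$, this inequality goes the other way: $\sigma < \sqrt{\sigma k} < k$.
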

\begin {proof}
  Let $\lambda$ be an integer such that $\sigma < \lambda < k$, of exact value to be specified later.
  First, we consider all pairs of equal symbols in $S$ that are more than $\lambda$ apart.
  We would like to take all of these pairs as separate segments;
  however, we cannot take segments that are not disjoint.
%
  So, we greedily take the leftmost symbol $\sym s$ whose partner is more than $\lambda$ further to the right, and mark this as a segment. We recurse on the substring remaining to the right of the rightmost $\sym s$.\footnote
  {In fact, we could choose any disjoint collection such that after their removal there are no more segments of this type longer than $\lambda$.}
  Finally, we segment the remaining pieces greedily into pieces of length $k$.
  Figure~\ref{fig:strings} illustrates the resulting segmentation.
  

  Now, we prove that the resulting segmentation has a cost of at most $(1+\eps) d(S)$.
  First, consider a symbol to be \emph {counted} if it appears in only one short (blue) segment, and to be \emph {double-counted} if it appears in two different short segments.
  Suppose $\sym s$ is double-counted. Then the distance between its two occurrences must be smaller than $\lambda$, otherwise it would have formed a long (red) segment. Therefore, it must appear in two adjacent short segments. The leftmost of these two segments has length exactly $k$, but only $\lambda$ of these can have a partner in the next segment. So, at most a fraction $\lambda / k$ symbols are double-counted.
  
  Second, we need to analyze the cost of the long (red) segments. In the worst case, all symbols in the segment also appear in another place, where they were already counted. In this case, the true cost would be $0$, and we pay $\sigma$ too much. However, we can assign this cost to the at least $\lambda$ symbols in the segment; since each symbol appears only twice they can be charged at most once. So, we charge at most $\sigma / \lambda$ to each symbol. 
  The total cost is then bounded by $(1 + \lambda / k + \sigma / \lambda) d(S)$.
  To optimize the approximation factor, we choose $\lambda$ such that $\lambda / k = \sigma / \lambda$; more precisely we take $\lambda =\lceil \sqrt {k\sigma} \rceil$. 
\end {proof}

Recall that for our resilience approximation we have $\sigma=4\ply-3$ (Lemma~\ref{lem:ResInside}). Thus, the actual value of $k$ is obtained by solving $\eps = 2 \sqrt{\sigma / k}$ for $k$, which leads to  $k = \lceil (16\ply-12) / \eps^2 \rceil$.

\begin{figure}[tb]
  \centering
  \subfigure {\put(-16,2){(a)} \includegraphics[width=\textwidth]{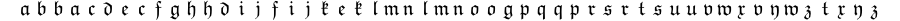}}
\vspace{-15pt}
  \subfigure {\put(-16,6){(b)} \includegraphics[width=\textwidth]{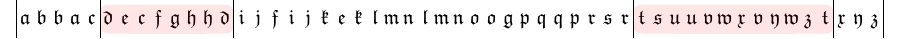}}

  \subfigure {\put(-16,6){(c)} \includegraphics[width=\textwidth]{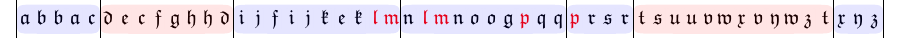}}
\vspace{-15pt}
  \caption 
  { (a) A string of $52$ symbols, each appearing twice.
    (b) First, we identify a maximal set of segments bounded by equal symbols, and longer than $\lambda = 4$.
    (c) Then, we segment the remaining pieces into segments of length $k = 10$. Red symbols are double-counted.
  }  
  \label{fig:strings}
\end{figure}

\subsection {Application to resilience approximation}

We now show that the shortest path between any $p,q$ in $G_k$ is a $(1+\eps)$-approximation of their resilience.
Let $\pi$ be a path from $p$ to $q$ in $\R^2$, and let $S(\pi)$ be the sequence that records every disk of $\mathcal D$ we enter along $\pi$, plus the disks that contain the start point of $\pi$,  added at the beginning of the sequence, in any order.
Then we have $|S(\pi)| = \thk(\pi)$.

\begin {lemma} \label{lem:alg_segmentation}
For every path $\pi$ from $p$ to $q$ and every segmentation $\cal T$ of $S(\pi)$, there exists a path from $p$ to $q$ in $G_k$ of cost at most $\psi (\cal T)$.
\end {lemma}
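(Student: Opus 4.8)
The plan is to translate the segmentation $\mathcal T = T_1, \ldots, T_m$ of the string $S(\pi)$ directly into a walk in $G_k$ that visits one cell per "break point" between consecutive segments, and to charge the cost of each segment $T_i$ to an edge (a real dual edge, a shortcut edge, or a short subpath of $\pi$) of cost at most $\psi(T_i)$. First I would set up the correspondence between positions in $S(\pi)$ and cells of $\arr$ traversed by $\pi$: reading $\pi$ from $p$ to $q$, each time we cross into a new disk we append a symbol, so after the prefix of $S(\pi)$ corresponding to the first $j$ symbols we are sitting in a well-defined cell $C_j$ of $\arr$ (with $C_0$ the cell of $p$ and $C_{|S(\pi)|}$ the cell of $q$, using that $p$ lies in no disk). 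The segmentation picks out indices $0 = i_0 < i_1 < \cdots < i_m = |S(\pi)|$, and the candidate path in $G_k$ will be $C_{i_0} \to C_{i_1} \to \cdots \to C_{i_m}$, i.e.\ from $p$ to $q$ through the cells at the segment boundaries.

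Next I would bound the cost of the $G_k$-edge used for each segment $T_i$ by $\psi(T_i)$, going case by case according to the definition of $\psi$. If $\ell(T_i) \le k$, then $\res(C_{i_{i-1}}, C_{i_i}) \le d(T_i)$: indeed the portion of $\pi$ realizing $T_i$ is a path between these two cells that enters only the $d(T_i)$ distinct disks recorded in $T_i$ (plus possibly disks containing the start cell, but those are already accounted for — I should be careful here and may instead just observe the portion of $\pi$ between $C_{i_{i-1}}$ and $C_{i_i}$ has resilience at most $d(T_i)$ directly), so since $\res \le k$ the shortcut edge $\overrightarrow{C_{i_{i-1}}C_{i_i}}$ exists in $G_k$ and has cost $\res(C_{i_{i-1}}, C_{i_i}) \le d(T_i) = \psi(T_i)$. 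If $T_i = \langle \sym a \lambda \sym a\rangle$ with $\ell(T_i) > \sigma$, then by Lemma~\ref{lem:ResInside} applied to the disk corresponding to $\sym a$ (which contains both the entry point, $C_{i_{i-1}}$, and exit point, $C_{i_i}$, of the segment), the resilience between these cells is at most $\ply = \sigma$; since $\sigma < k$ the shortcut edge again exists and has cost at most $\sigma = \psi(T_i)$. In the remaining case $\ell(T_i) > k$, we simply do not take a shortcut: we follow $\pi$ itself along the cells $C_{i_{i-1}}, C_{i_{i-1}+1}, \ldots, C_{i_i}$, using the corresponding real dual edges of $G_{\arr} \subseteq G_k$; the cost incurred is the number of disk-entries, which is at most $\ell(T_i) = \psi(T_i)$.

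Finally I would concatenate these pieces: the union is a walk from $p$ to $q$ in $G_k$, and its total cost is $\sum_i (\text{cost of piece } i) \le \sum_i \psi(T_i) = \psi(\mathcal T)$; a shortest path is no longer than this walk, which proves the lemma. The main obstacle I anticipate is the bookkeeping around the "start disks" prepended to $S(\pi)$ and the first segment: I want the cell sequence to be genuinely determined by prefixes of $S(\pi)$, and I want each segment's cost bound to be clean, so I would phrase the resilience bounds in terms of "the portion of $\pi$ strictly between cells $C_{i_{i-1}}$ and $C_{i_i}$" rather than re-deriving them from the symbol multiset — this sidesteps any double-counting of disks that contain $p$. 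The case analysis itself and the Lemma~\ref{lem:ResInside} invocation are routine once that correspondence is nailed down.
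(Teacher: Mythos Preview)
Your approach is essentially the paper's own proof: decompose the walk along segment boundaries and, for each segment, use either the shortcut edge (cases $\ell\le k$ and $\sym a\lambda\sym a$, via Lemma~\ref{lem:ResInside}) or the original piece of $\pi$ (case $\ell>k$), then sum. The only wobble is that in the $\sym a\lambda\sym a$ case the cell $C_{i_{i-1}}$ lies just \emph{outside} $D_{\sym a}$ (you enter $D_{\sym a}$ at step $i_{i-1}+1$), not inside it; this is harmless since the single extra entry into $D_{\sym a}$ is already among the at most $\ply$ disks counted by Lemma~\ref{lem:ResInside}, so $\res(C_{i_{i-1}},C_{i_i})\le\sigma$ still holds.
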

\begin {proof}
  We describe how to construct a path in $G_k$ based on $\cal T$.
For every segment $T$ of $\cal T$,
  we create a piece of path whose length in $G_k$ is at most the cost of the segment $\psi(T)$.
  
 There are three types of segments.
 The first type are segments that start and end with the same symbol $\sym a$, which corresponds to a disk $D \in \cal D$. For those, we make a shortcut path that stays inside $D$, as per Lemma~\ref {lem:ResInside}.
 The second type are segments whose length is at most $k$. For those, by definition, $G_k$ contains a shortcut edge whose cost is exactly the resilience between the corresponding cells of $\arr$.
 The third type are the remaining segments. For those, we simply use the piece of $\pi$ that corresponds to $T$.
%
%
\end {proof}

\begin{figure}[tb]
  \centering
  \subfigure {{(a)} \includegraphics[width=0.40 \textwidth]{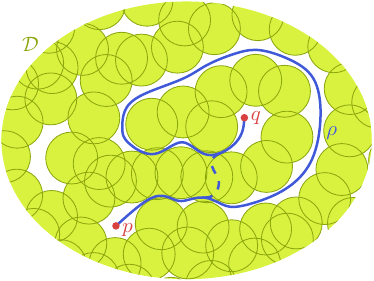}}
  \subfigure {{(b)} \includegraphics[width=0.40 \textwidth]{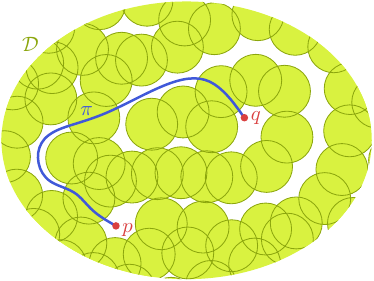}}
  \caption 
  { (a) The optimal path $\rho$, achieving a resilience of $2$.
      There is a segmentation of $\rho$ of cost $3$, using the dashed shortcut.
    (b) A minimum cost path $\pi$ found by the algorithm. 
      In this example, the resilience of $\pi$ is $3$.
  }  
  \label{fig:approx-optimal}
\end{figure}
\begin {lemma}\label{lem:apx-res}
  For any $p,q\in\R^2$, it holds $\cost_{G_k}(\shpa_{G_k}(p,q))\leq (1+\eps)\res(p,q)$.
\end {lemma}
\begin {proof}
  Let $\rho$ be a path from $p$ to $q$ of optimal resilience $r^* = \res(\rho) = \res(p,q)$.
  Then, consider the sequence $S(\rho)$, that is, the sequence of disks that $\rho$ enters.
  Now, by Lemma~\ref {lem:segmentation}, there exists a segmentation $\cal T$ of $S(\rho)$ of cost at most $(1 + \eps)d(S(\rho)) = (1 + \eps)r^*$.
  By Lemma~\ref {lem:alg_segmentation}, there exists a path in $G_k$ of equal or smaller cost.
  Figure~\ref {fig:approx-optimal} illustrates this.

  Now, consider the path $\pi$ that our algorithm produces. The resilience of $\pi$ is smaller than the cost of $\pi$ in $G_k$, which is smaller than the cost of $\rho$ in $G_k$, which is smaller than $1+\eps$ times the resilience of $\rho$.
  That is:
  $    r(\pi) \le \cost_{G_k}(\pi) \le \cost_{G_k}(\rho) \le (1+\eps)r(\rho) = (1+\eps)r^*
  $.
\end {proof}

\begin {theorem}\label {thm:approx}
  Let $\regions$ be a set of unit disks of ply $\ply$ in $\R^2$. We can compute a path $\pi$ between any two given points $p,q\in\R^2$ whose resilience is at most $(1+\eps)r(p,q)$ in $O(2^{f(\ply, \eps)}n^5)$ time, where $f(\ply,\eps) = O\left(\frac{\ply^2\log(\ply / \eps)}{\eps^4}+o\left(\frac{\ply^2\log(\ply / \eps)}{\eps^4}\right)\right)$.
\end {theorem}
\begin{proof}
The running time of the algorithm is dominated by the preprocessing stage: determining if the resilience between every pair of vertices of $G_{\arr}$ is at most $\lceil (16\ply-12) / \eps^2 \rceil$. Since $G_{\arr}$ is an arrangement of disks with ply at most $\ply$, it has $O(\ply n)$ cells\footnote{We thank the anonymous referee that pointed this to us and allowed the dependency in $n$ to be lowered.}. We execute the algorithm of Theorem~\ref{thm:fpt} for every pair of cells (thus, $O(\ply^2 n^2)$ times), and we obtain the desired bound.
\end{proof}

\subsection {Extension to fat regions} \label {sec:approx-fat}

As in Section~\ref {sec:fpt-fat}, we now generalize the result to arbitrary $\beta$-fat unit regions. We again assume that our collection of regions has bounded ply $\ply$, and that the region boundaries have $O(1)$ pairwise intersections. As in Section~\ref{sec:fpt-fat}, for simplicity in the notation our analysis assumes that the region boundaries have at most two pairwise intersections, implying that the intersection between any two overlapping regions has one connected component. However, our results generalize to $k = O(1)$ pairwise intersections between region boundaries.

\begin{lemma}
\label{lem:FatResInside}
Let $D \in \regions$, where \arr\ has ply $\ply$, and let $p$,$q$ be any two points inside $D$. 
Then the resilience between $p$ and $q$ in \regions\ is at most $(2\beta+1)^2\ply$.
\end{lemma}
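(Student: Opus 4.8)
The plan is to mimic the proof of Lemma~\ref{lem:ResInside}, replacing the unit-disk packing estimate by the $\beta$-fat estimate already used in Lemma~\ref{lem:AtMostConstant}. First I would recall the structure of the disk argument: let $c \geq 1$ be the number of regions of $\regions$ that contain $p$ or $q$; these must all be removed, and since $D$ contains both points, each such region intersects $D$. Now suppose the resilience between $p$ and $q$ is $c + c'$. Then, beyond the $c$ forced regions, there are $c'$ further regions that must be removed, and by the same separating-pair reasoning as in Lemma~\ref{lem:ResInside}, each such region $D_1$ participates in a \emph{separating pair} $\{D_1, D_2\}$ whose union separates $p$ from $q$ inside $D$, and these $c'$ separating pairs can be taken pairwise disjoint.

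The key geometric step is to argue that every region involved in such a separating pair — and every region containing $p$ or $q$ — must contain a common point of $\regions$ of multiplicity at least $c' + c + 1$, so that bounded ply $\ply$ forces $c' + c \leq \ply - 1$, hence $\res(p,q) = c + c' \le \ply - 1$... but with the $\beta$ blow-up. Concretely: a separating pair $\{D_1,D_2\}$ intersecting $D$ means, since all regions are $\beta$-fat unit regions, that each of $D_1,D_2$ contains a unit disk whose center lies within distance $2\beta$ of the center $c_D$ of the unit disk $C \subseteq D$ (this is exactly the computation in the proof of Lemma~\ref{lem:AtMostConstant}: a region intersecting $D$ has its inner unit disk centered within distance $2\beta$, and is contained in a disk of radius $2\beta+1$ about $c_D$). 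By a straightforward area argument, at most $(2\beta+1)^2$ pairwise-disjoint unit disks fit into a disk of radius $2\beta+1$; since the $c'$ separating pairs are disjoint and the $c$ containing regions are pairwise distinct, we get $c + 2c' \leq (2\beta+1)^2$ — or more crudely, $c + c' \le (2\beta+1)^2 \ply$ once ply is taken into account. I would mirror the exact bookkeeping of Lemma~\ref{lem:ResInside}, just with ``$\ply$ disjoint unit disks'' replaced by ``$(2\beta+1)^2\ply$ disjoint unit disks'', yielding the claimed bound $\res(p,q) \le (2\beta+1)^2\ply$.

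The main obstacle I anticipate is verifying that the separating-pair structure still holds for fat regions rather than disks: for disks, if a single extra region is removed it is because together with one neighbor it pinches off $p$ from $q$, and the ``at least one of the two intersects the center of $D$'' fact is what ties everything to the ply at a single point. For $\beta$-fat regions the pinching regions need not pass near $c_D$, so the correct statement is the weaker ``each pinching region is contained in the disk of radius $2\beta+1$ about $c_D$'', and one must then observe that the $c$ containing regions plus the regions in the $c'$ disjoint separating pairs all contribute disjoint inner unit disks inside that bounded disk. This gives the count against ply (every point of that disk witnessing the overlap lies in $\ply$ regions, so the number of disjoint inner unit disks is at most $(2\beta+1)^2 \ply$ by area, hence $c + c' \le (2\beta+1)^2\ply$), which is all that is needed. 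The remaining routine work is the area estimate, identical in spirit to the one already invoked in the proof of Lemma~\ref{lem:AtMostConstant}, so I would state it and move on.
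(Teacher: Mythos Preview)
The paper's proof is considerably simpler than what you propose and does not use separating pairs at all. It simply observes that the resilience between $p$ and $q$ is at most the number of regions of $\regions$ that intersect $D$ (removing all of them leaves the connected region $D$ obstacle-free, so $p$ and $q$ are joined inside $D$), and then bounds that count by the same packing-plus-ply estimate you cite from Lemma~\ref{lem:AtMostConstant}: every region meeting $D$ has its inner unit disk inside a disk of radius $2\beta+1$ about the center of $D$, and since the ply of $\regions$ is $\ply$, an area comparison gives at most $(2\beta+1)^2\ply$ such regions.

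Your separating-pair route has genuine problems, not just extra complexity. First, you analyze the \emph{optimal} removed set and try to argue each of its members lies near $D$; but nothing forces an optimal path from $p$ to $q$ to stay inside $D$, so the optimal removed set could in principle contain regions far from $D$. What you actually need is an upper bound on the resilience, and for that it suffices to exhibit \emph{one} feasible set --- which is exactly what the paper does by removing everything that touches $D$. Second, the claim that ``the $c$ containing regions plus the regions in the $c'$ disjoint separating pairs all contribute \emph{disjoint} inner unit disks'' is false in general (overlapping regions can have overlapping inner disks), and you seem to notice this when you immediately reintroduce~$\ply$ in the next sentence; but if you are going to invoke the ply-weighted area bound anyway, the separating-pair decomposition is doing no work. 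Third, the assertion that the $c'$ separating pairs ``can be taken pairwise disjoint'' is nontrivial even in the unit-disk setting and is not something you can simply import. Drop the separating pairs entirely and just count the regions intersecting $D$.
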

\begin{proof}
The resilience between $p$ and $q$ is upper-bounded by the number of regions that intersect $D$. 
We can give an upper bound using a simple packing argument.
Since $p$ and $q$ belong to a $\beta$-(unit)fat region $D$, they are both inside a circle $C$ with center $c$ and radius $\beta$.
Any other $\beta$-fat region $D'$ that interferes with the path from $p$ to $q$ must intersect $C$. 
Such an intersecting region, being also $\beta$-fat, must contain a unit-disk whose center cannot be more than $2\beta$ away from $c$.
Therefore all regions intersecting $C$ have their unit-disks centered at distance at most $\beta$ from $c$.
Moreover, such disks are totally contained in a disk of radius $2\beta + 1$ centered at $c$.
As in the proof of Lemma~\ref{lem:AtMostConstant}, we can show that at most $(2\beta+1)^2$ disjoint unit-disks fit into a disk of radius $(2\beta+1)$.
Since the ply is at most $\ply$, the maximum number of unit-disks inside a disk of radius $\beta$ in $\regions$ is $(2\beta+1)^2 \ply$.
\end{proof}

As before, the rest of the arguments do not rely on the geometry of the regions anymore, and we can proceed as in the disk case. The only difference is that the value $\sigma$ of doing a shortcut has increased to $(2\beta+1)^2\ply$. 

\begin {theorem}\label {thm:approx-fat}
  Let $\regions$ be a set of $\beta$-fat regions of ply $\ply$ in $\R^2$. We can compute a path $\pi$ between any two points $p,q\in\R^2$ whose resilience is at most $(1+\eps)r(p,q)$ in $O(2^{f(\ply,\beta,\eps)}n^5)$ time, where $f(\ply,\beta,\eps) = O\left(\frac{\ply^4\beta^8}{\eps^4}\log(\beta\ply/\eps)\right)$.
\end {theorem}

\section{NP-hardness} \label {sec:np}
In this section we show that computing the resilience of certain types of fat regions is NP-hard. We recall that several NP-hardness results for other shapes are already known, but most of them are for skinny objects. For example, hardness for the case in which regions are line segments in $\R^2$ was shown in~\cite{acgk-cpslsa-11,tk-brsn-11} and ~\cite[Section 5.1]{y-sppacgatm-12}. Our contribution is to show that hardness holds for for the case in which ranges have bounded fatness (i.e., ranges are not skinny). The only hardness proof that we know for objects of positive area is by Tseng~\cite{t-rwsn-11}, who shows that if the regions are rotations and translations of a fixed square or ellipse the problem is NP-hard. 

In addition to showing that the problem is difficult for other shapes, our construction is of independent interest, since it is completely different from those given in~\cite{acgk-cpslsa-11},~\cite{t-rwsn-11},~\cite {tk-brsn-11}, and~\cite[Section 5.1]{y-sppacgatm-12}. Moreover, our proof has the advantage of being easy to extend to other shapes. We also note that the construction of Tseng uses several rotations of a fixed shape (i.e., 3 for a square, 4 for an ellipse), whereas our construction only needs two different rotations of the same shape.

First we show NP-hardness for general connected regions, and later we extend it to axis-aligned rectangles of aspect ratio $1:1+\eps$ and $1+\eps:1$. We start the section establishing some useful graph-theoretical results.


  Let $G$ be a graph, and let $p$ be a point in the plane. Let $\Gamma$ be an embedding of $G$ into the plane, which behaves properly (vertices go to distinct points, edges are curves that do not meet vertices other than their endpoints and do not triple cross), and such that $p$ is not on a vertex or edge of the embedding. We say $\Gamma$ is an \emph {odd} embedding around $p$ if it has the following property: every cycle of $G$ has odd length if and only if the winding number of the corresponding closed curve in the plane in $\Gamma$ around $p$ is odd.
%
 We say a graph $G$ is \emph {oddly embeddable} if there exists an odd embedding $\Gamma$ for it (Figure~\ref {fig:odd-planar+odd-nonplanar+odd-tripartite} shows some examples). We claim that vertex cover is NP-hard for this constrained class of graphs. The proof of this statement  is based on two observations.
%



%
%


  \drieplaatjes [scale=0.82] {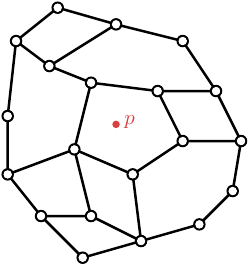} {odd-nonplanar} {odd-tripartite}
  { (a) A planar odd embedded graph. 
    (b) A non-planar one.
    (c) A tripartite graph, oddly embedded around $p$.
  }

  \begin {observation}
    Every tripartite graph is oddly embeddable.
  \end {observation}
  \begin {proof}
    The vertices of a tripartite graph $G$ can be divided into three groups $V_1, V_2, V_3$ such that there are no internal edges in any of these groups. Now, consider a triangle $\ply$ around $p$. We create an embedding $\Gamma$ where all vertices in $V_1$ are close to one corner of $\ply$, the vertices in $V_2$ are close to a second corner, and the vertices in $V_3$ are close to the remaining corner. All edges are straight line segments.
    See Figure~\ref {fig:odd-tripartite}.    
    
    Consider the graph $H$ obtained from $G$ by contracting all vertices in $V_i$ to a single vertex $v_i$; $H$ is a triangle (or a subgraph of a triangle).
    Now consider any cycle in $G$, and project it to $H$. Since there were no edges in $G$ connecting vertices within a group $V_i$, this does not change the length of the cycle, nor does it change the winding number around $p$. 
    Any two consecutive edges from $v_i$ to $v_j$, and back from $v_j$ to $v_i$, do not influence the parity of the length of the cycle, nor the winding number around $p$, so we can remove them from the cycle. We are left with a cycle of length $3w$ and winding number $w$ or $-w$, for some integer $w$. Clearly, $3w$ is odd if and only if $w$ is odd.
    Therefore, $\Gamma$ is an odd embedding of $G$, as required.
  \end {proof}


  The \emph {maximum independent set} problem in a graph asks for the largest set of vertices in the graph such that no two vertices in the set are connected by an edge. This problem is well-known to be NP-hard on general graphs. In fact, it remains NP-hard for tripartite graphs. A simple proof is included for completeness, and because we need the argument later. Note that a minimum vertex cover is the complement of a maximum independent set, hence by proving the NP-hardness of maximum independent set, we are also proving that minimum vertex cover is NP hard.

  \begin {observation} \label{obs:insert-even}
    Let $G=(V,E)$ be a graph.
    Let $G'$ be obtained from $G$ by subdividing every edge $e \in E$ into an odd number of pieces, by adding an even number $m_e$ of new vertices. Let $m = \sum_{e} m_e$ be the total number of vertices added.
    Then $G$ has a maximum independent set $I \subset V$ if and only if $G'$ has a maximum independent set $I'$ with $|I'| = |I| + m/2$.
  \end {observation}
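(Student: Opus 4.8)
The plan is to prove the equivalent statement that $\alpha(G')=\alpha(G)+m/2$, where $\alpha(\cdot)$ denotes the size of a maximum independent set; the displayed equivalence follows at once, since all maximum independent sets of a graph have the same cardinality. Before anything else I would fix notation: for each edge $e=uv\in E$, let $P_e$ be the path with vertex sequence $u=y^e_0,y^e_1,\dots,y^e_{m_e},y^e_{m_e+1}=v$ that replaces $e$ in $G'$, with internal vertices $y^e_1,\dots,y^e_{m_e}$, and record the parity observation that $m_e$ is even because $e$ was split into the odd number $m_e+1$ of pieces. The only combinatorial fact about paths I will use is the standard one that a path on $k$ vertices has a maximum independent set of size $\lceil k/2\rceil$.

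For $\alpha(G')\ge\alpha(G)+m/2$, I would take a maximum independent set $I$ of $G$ and enlarge it edge by edge. Fix $e=uv$; since $I$ is independent, at most one of $u,v$ lies in $I$. If neither does, the internal vertices of $P_e$ form an unconstrained $P_{m_e}$, so I can add $m_e/2$ of them, say the odd-indexed ones. If exactly one endpoint, say $u$, lies in $I$, I must avoid $y^e_1$, leaving a $P_{m_e-1}$ on $y^e_2,\dots,y^e_{m_e}$ from which I can still take $\lceil(m_e-1)/2\rceil=m_e/2$ vertices. Applying this to all edges keeps the set independent in $G'$ (the original edges of $G$ are gone, and each added internal vertex is adjacent only to internal vertices of the same $P_e$ or to an endpoint deliberately kept out of $I$), and it adds exactly $\sum_e m_e/2=m/2$ vertices.

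For the reverse inequality $\alpha(G')\le\alpha(G)+m/2$, I would take a maximum independent set $I'$ of $G'$ and set $I_0=I'\cap V$. On each $P_e$ the internal vertices of $I'$ form an independent set in $P_{m_e}$, hence there are at most $m_e/2$ of them; and if both $u,v\in I_0$ then $y^e_1,y^e_{m_e}\notin I'$, which drops the internal count to at most $m_e/2-1$. Calling an edge \emph{bad} when both its endpoints lie in $I_0$, and letting $B$ be the set of bad edges, summing over all edges yields $|I'|\le|I_0|+m/2-|B|$. Now the subgraph of $G$ induced on $I_0$ has edge set exactly $B$, so it admits a vertex cover of size at most $|B|$; deleting those vertices from $I_0$ produces an independent set of $G$ of size at least $|I_0|-|B|\ge|I'|-m/2$. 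Combining this with the previous paragraph gives $\alpha(G')=\alpha(G)+m/2$.

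The one genuinely delicate point is this reverse direction: the obvious candidate $I_0=I'\cap V$ need not be independent in $G$, and the heart of the argument is the observation that each bad edge simultaneously forces the loss of one internal vertex on its subdivision path (so $|I_0|$ is correspondingly larger) and can be repaired by deleting a single vertex — the two effects cancel, which is exactly why $m/2$ is the precise gap. Everything else is parity arithmetic together with the elementary bound for independent sets in paths.
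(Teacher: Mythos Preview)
Your proof is correct and follows the same extend/restrict strategy as the paper. Your reverse direction is in fact more careful than the paper's: the paper simply strips out all subdivision vertices from $I'$, observes that at most one of each adjacent pair was present (writing ``$|E|$'' where it evidently means $m/2$), and stops --- glossing over the possibility that $I'\cap V$ is not independent in $G$; your bad-edge/vertex-cover bookkeeping is exactly what is needed to close that gap.
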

  
  \begin {proof}
    For every independent set $I \subseteq V$ in $G$, there is a corresponding independent set $I'$ in $G'$ with $|I'| = |I| + m/2$: for every pair of extra vertices on an edge, we can always add one of the two to an independent set.
    Conversely, for every independent set $I'$ in $G'$, there is a corresponding independent set $I$ in $G$ with $|I| = |I'| - m/2$: $I'$ cannot use both extra vertices on an edge, so if we simply remove all extra vertices we remove at most $|E|$ elements from $I'$ (clearly, if we remove less than $m/2$ vertices from $I'$ this way, we can remove more vertices until $I$ has the desired cardinality).
  \end {proof}

From the above observations, it follows that maximum independent set is also NP-hard on tripartite graphs,  and hence, also on oddly embeddable graphs. 
Since our construction does not increase the maximum vertex degree, and vertex cover is known to be NP-hard for graphs with maximum degree three, we obtain the following.

  \begin{corollary}\label{cor_np}
    Minimum vertex cover on oddly embeddable graphs of maximum degree 3 is NP-hard.
  \end {corollary}

  Given an embedded graph $\Gamma$, we say that a curve in the plane is an \emph {odd Euler path} if it does not go through any vertex of $\Gamma$ and it crosses every edge of $\Gamma$ an odd number of times. 
  
  \begin {lemma} \label {lem:eulerpath}
    Let $p$ be a point in the plane, and $\Gamma$ an oddly embedded graph around $p$. Then there exists an odd Euler path for $\Gamma$ that starts at $p$ and ends in the outer face. Moreover, such path can be computed in polynomial time.
  \end {lemma}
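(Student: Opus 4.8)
The plan is to recast the existence of an odd Euler path as a linear‑algebra question over $\mathbb{Z}/2$ and then recognize the hypothesis inside it. I work with the arrangement of the embedding $\Gamma$, treating crossings of edges as ordinary arrangement vertices. Fix any ``generic'' curve $\sigma_0$ from $p$ to a point $x_\infty$ in the outer face that avoids every vertex of $\Gamma$ and meets all edges transversally, and let $v_0\in(\mathbb{Z}/2)^{E(G)}$ be its \emph{crossing vector}: $(v_0)_e$ is the parity of the number of times $\sigma_0$ crosses $e$. Any curve from $p$ to the outer face is obtained from $\sigma_0$ by concatenating some closed curve $\gamma$, which changes the crossing vector by the crossing vector $c(\gamma)$ of $\gamma$. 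I will use repeatedly the elementary fact that, for an arc $\alpha$ with endpoints $a,b$ and a closed curve $\zeta$ avoiding $a,b$, the parity of $|\alpha\cap\zeta|$ equals $w_\zeta(a)+w_\zeta(b)$, where $w_\zeta(\cdot)$ is winding number around a point; in particular $c(\gamma)_e=w_\gamma(u)+w_\gamma(v)$ for $e=uv$. Since small circles drawn around a chosen set of vertices realize every function $V(G)\to\mathbb{Z}/2$ as $\gamma\mapsto w_\gamma|_{V(G)}$, the set of crossing vectors attained by $p$‑to‑outer‑face curves is exactly the coset $v_0+B$, where $B=\{(\phi(u)+\phi(v))_{uv\in E}:\phi\in(\mathbb{Z}/2)^{V}\}$ is the cut space of $G$.

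Hence an odd Euler path exists iff the all‑ones vector $\mathbf{1}\in(\mathbb{Z}/2)^{E(G)}$ lies in $v_0+B$, i.e.\ iff $\mathbf{1}+v_0$ lies in the cut space. Using that the cut space is the orthogonal complement of the $\mathbb{Z}/2$ cycle space, this is equivalent to: for every cycle $C$ of $G$, $|C|+\sum_{e\in C}(v_0)_e\equiv 0\pmod 2$. Now $\sum_{e\in C}(v_0)_e$ is the parity of the number of crossings of $\sigma_0$ with the closed curve $\gamma_C$ that $C$ traces in $\Gamma$, and applying the parity fact to the arc $\sigma_0$ (whose endpoint $x_\infty$ lies in the outer face, so $w_{\gamma_C}(x_\infty)=0$) shows this parity equals $w_{\gamma_C}(p)\bmod 2$. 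So the condition becomes precisely: for every cycle $C$, $|C|$ is odd if and only if the winding number of $C$ around $p$ is odd --- which is exactly the assumption that $\Gamma$ is oddly embedded around $p$. This establishes existence (and in fact the converse too).

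For the polynomial‑time claim, I would make every step above constructive: compute the arrangement of $\Gamma$, take $\sigma_0$ to be a realization of a dual path from the cell containing $p$ to the outer cell and read off $v_0$; solve the linear system $(\phi(u)+\phi(v))_{uv\in E}=\mathbf{1}+v_0$ for $\phi\in(\mathbb{Z}/2)^{V}$ by Gaussian elimination over $\mathbb{Z}/2$ (solvable by the previous paragraph); and finally build the path by taking $\sigma_0$ and then appending, one after another, a ``there‑and‑back'' detour to a small neighbourhood of each vertex $u$ with $\phi(u)=1$ that loops once around $u$ before returning. The doubled travel portions contribute even crossing counts to every edge, so the final curve has crossing vector $v_0+(\mathbf{1}+v_0)=\mathbf{1}$, still starts at $p$, still ends at $x_\infty$ in the outer face, and avoids all vertices of $\Gamma$.

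The main obstacle is purely the bookkeeping that links the combinatorial hypothesis to the topology: defining the crossing vector cleanly, verifying that the attainable crossing vectors form the coset $v_0+B$ with $B$ \emph{exactly} the cut space (neither larger nor smaller), and invoking the arc/closed‑curve parity identity in the two right places (once to describe $c(\gamma)$, once to evaluate $\sum_{e\in C}(v_0)_e$). Once this dictionary is set up, matching the resulting condition against the definition of an odd embedding is immediate, and the algorithm reduces to linear algebra over $\mathbb{Z}/2$ plus elementary surgery on curves.
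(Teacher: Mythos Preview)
Your proof is correct and takes a genuinely different route from the paper. The paper argues geometrically: it first subdivides edges so that each edge is involved in at most one crossing, then \emph{flattens} every crossing by identifying two of the four incident vertices (choosing which pair to merge via a parity argument that uses the odd-embedding hypothesis), obtaining a planar multigraph $\Gamma''$ in which every face has even length except the face containing $p$ and the outer face; the desired curve is then an ordinary Euler path in the dual multigraph of $\Gamma''$. Your argument is algebraic: you set up crossing vectors modulo~$2$, identify the reachable vectors as the coset $v_0+B$ with $B$ the cut space of $G$, and reduce ``$\mathbf 1\in v_0+B$'' to the odd-embedding condition via cycle--cut orthogonality and the arc/closed-curve parity identity. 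What your approach buys is a clean if-and-only-if (you get the converse for free), a uniform treatment that needs no special handling of crossings between edges in different components, and an algorithm that is just Gaussian elimination over $\mathbb Z/2$ plus local surgery. What the paper's approach buys is a more explicit, arguably more economical curve: its output crosses each edge of the subdivided graph exactly once, whereas your lasso construction may produce a curve with many (but odd) crossings per edge and many self-intersections --- harmless for the lemma as stated, but worth noting since later in the paper the curve $T$ is post-processed to be simple.
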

  \drieplaatjes [scale=0.65] {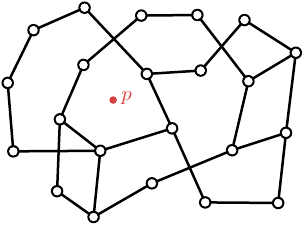} {crossing-flatten} {crossing-tour} {(a) An oddly embedded graph with four crossings. (b) The crossings are flattened according to the parity of their vertices. (c) An odd Euler path from $p$ to the outer face.}
  
  \begin {proof}
    First, we insert an even number of extra vertices on every edge of $\Gamma$ such that in the resulting embedded graph $\Gamma'$, every edge crosses at most one other edge. Now we construct an Euler path that crosses every edge of $\Gamma'$ exactly once; note that this path will therefore cross every edge of $\Gamma$ an odd number of times.
    Consider a pair of crossing edges and the four vertices concerned. For each pair of consecutive vertices (vertices that are not endpoints of the same edge), find a path in the graph that does not go around $p$ (when seen as a cycle, after adding the crossing).

     The parity of the length of this path does not depend on which path we take: if there would be an even-length path and an odd-length path between the same two vertices, both of which do not go around $p$, then there would be an odd cycle that does not contain $p$, which contradicts the oddly embeddedness of $\Gamma'$. Now, if the path has even length, we identify these two vertices. 
Note that of the four pairs of vertices involved in a crossing (i.e., ignoring the two pairs forming edges in $\Gamma$), exactly two pairs will have odd length connecting paths, so effectively we ``flatten'' the crossing. We do this for all crossings, and call the resulting multigraph $\Gamma''$. See Figure~\ref {fig:crossing-flatten}.
(If the two crossing edges belong to different connected components of $\Gamma$, there are no paths connecting their vertices; in this case we make an arbitrary choice of which vertices to identify.)
    
    Now $\Gamma''$ is planar. Furthermore, by construction, all faces of $\Gamma''$ have even length, except the one containing $p$ and the outer face. Therefore, the dual multigraph of $\Gamma''$ has only two vertices of odd degree, and hence has an Euler path between these vertices. Furthermore, this Euler path crosses every edge of $\Gamma'$ exactly once, and therefore every edge of $\Gamma$ an odd number of times. Note that the proof is constructive. Moreover, both the transformations and the Euler path can be done in polynomial time, hence such path can also be obtained in polynomial time. 
  \end {proof}



  
  \begin {lemma} \label {lem:odd-independent-general}
    Let $p$ be a given point in the plane, and $\Gamma$ an oddly embedded graph (not necessarily planar) around $p$. Furthermore, let $T$ be a curve that forms an odd Euler path from $p$ to the outer face. Then we can construct a set $\mathcal D$ of connected regions such that a minimum set of regions from $\mathcal D$ to remove corresponds exactly to a minimum vertex cover in $\Gamma$.
  \end {lemma}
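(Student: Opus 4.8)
The plan is to reduce \textsc{minimum vertex cover} on $\Gamma$ to barrier resilience. I would build one connected region $R_v$ for every vertex $v$ of $\Gamma$, set $\mathcal D=\{R_v\}_{v\in V(\Gamma)}$, and place $q$ in the outer face of $\Gamma$, so that the regions have the following properties, where ``a family of regions \emph{encloses} $p$'' means its union separates $p$ from the outer face: (i) each $R_v$ is connected; (ii) $R_u\cap R_v\ne\emptyset$ iff $uv\in E(\Gamma)$, and then the intersection is a single small lens; (iii) for every edge $uv$, the pair $\{R_u,R_v\}$ already encloses $p$; and (iv) for every independent set $I$ of $\Gamma$, the family $\{R_v:v\in I\}$ does \emph{not} enclose $p$. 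Given such a construction, and recalling that by Corollary~\ref{cor:odd-hard} minimum vertex cover is NP-hard on oddly embeddable graphs (it is $|V(\Gamma)|$ minus a maximum independent set), together with the fact that $T$ is supplied (or computable in polynomial time, Lemma~\ref{lem:eulerpath}), this is exactly the claimed reduction.

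Properties (iii) and (iv) are what make the minimum resilience equal the minimum vertex cover number $\tau(\Gamma)$, and in fact make minimum-resilience solutions correspond to minimum vertex covers. Indeed, if $C$ is a minimum vertex cover then $V(\Gamma)\setminus C$ is a maximum independent set, so by (iv) the surviving regions $\{R_v:v\notin C\}$ do not enclose $p$; hence $p$ lies in the unbounded complementary component, which also contains $q$, and $\res(p,q)\le|C|=\tau(\Gamma)$. Conversely, suppose a set $S\subseteq\mathcal D$ is removed and some curve joins $p$ to $q$ avoiding every surviving region, and let $C=\{v:R_v\in S\}$. If $C$ were not a vertex cover there would be an edge $uv$ with $R_u,R_v\notin S$, so by (iii) the surviving family already encloses $p$, contradicting the existence of the $p$--$q$ curve. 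Hence $C$ is a vertex cover and $\res(p,q)\ge|S|\ge\tau(\Gamma)$, so the two quantities agree and so do the optimal solutions.

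The construction itself is where the ``odd'' hypothesis and the odd Euler path $T$ do all the work. Concretely I would take each $R_v$ to be a thin $\delta$-neighbourhood of a connected curve $\beta_v$ assembled from pieces running along (portions of) the drawn edges of $\Gamma$ incident to $v$ and along copies of $T$: as in the proof of Lemma~\ref{lem:eulerpath} one first subdivides edges and ``flattens'' the crossings of $\Gamma$ using the parity hypothesis — a cycle of $\Gamma$ is odd exactly when it winds oddly around $p$ — and then routes the remaining pieces so that they braid around $p$ the correct number of times. The flattening is what lets $\beta_u$ and $\beta_v$ meet only along the drawn edge $uv$ (the lens of (ii) for adjacent vertices and disjointness for non-adjacent ones), and the winding counts are what turn ``$uv$ is an edge'' into ``$\beta_u$ and $\beta_v$ together surround $p$'' (property (iii)) while keeping unions over independent sets from closing up around $p$ (property (iv)). Finally one shrinks $\delta$ so that $p$ and $q$ lie in no region.

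The main obstacle is precisely this construction: producing a \emph{single} routing that simultaneously achieves (i)--(iv). Properties (iii) and (iv) together amount to a ``threshold'' requirement on how the regions wrap around $p$ — any one edge completes a closed barrier, but no independent family ever does — and this cannot in general be realised by simple nested arcs around $p$; it is the braiding supplied by the odd Euler path, together with the parity guarantees of odd embeddedness, that provides enough freedom. Checking that the flattening and routing can be carried out while keeping every $R_v$ connected and keeping non-adjacent regions strictly disjoint (so that no stray intersection accidentally traps $p$ or creates an unwanted hole) is the technical heart of the proof; the rest — reading off a vertex cover from a resilient path, and the Jordan-curve and parity bookkeeping — is routine once the regions are known to have properties (i)--(iv).
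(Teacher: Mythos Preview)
Your reduction logic in the second paragraph is sound: if regions with properties (i)--(iv) exist, then minimum resilience equals $\tau(\Gamma)$, and your appeal to Janiszewski's theorem (implicitly, for (iv)) is fine once non-adjacent regions are disjoint and no single region encloses $p$. But the construction itself is not carried out, and as you have specified it, it cannot work. If $R_u$ and $R_v$ are each simply connected and their intersection is a single lens (one connected piece), then $R_u\cup R_v$ is again simply connected and therefore cannot enclose $p$; so (ii) as stated is flatly incompatible with (iii). To rescue (iii) you would need either multi-component intersections or non-simply-connected vertex regions, and then you must separately ensure no single $R_v$ traps $p$. Your description of the $\beta_v$ as pieces of drawn edges ``braided'' along copies of $T$ does not resolve this, and you yourself flag the routing as the unproved ``technical heart''. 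That is not a minor omission: it is the entire content of the lemma.

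The paper's proof takes a quite different route that sidesteps this difficulty. Instead of asking each adjacent pair $\{R_u,R_v\}$ to enclose $p$ on its own, the paper introduces an additional ``wall'' region $W$ (duplicated $n$ times) that forms a corridor around the Euler path $T$, so any escape from $p$ of cost below $n$ is forced to follow $T$. After subdividing edges so that $T$ crosses every edge exactly once (legitimate by Observation~\ref{obs:insert-even}), the region $D_v$ is just the embedded point $v$ together with the incident half-edges up to their crossing with $T$. Thus each edge of $\Gamma'$ is split by $T$ into two pieces belonging to its two endpoints' regions, and to pass that crossing along the corridor one must delete at least one of them---precisely the vertex-cover condition. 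The wall carries the ``enclosing'' burden that your property (iii) tries to place on the vertex regions; this use of $W$ is the idea your proposal is missing.
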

  
    \tweeplaatjes [scale=0.8]{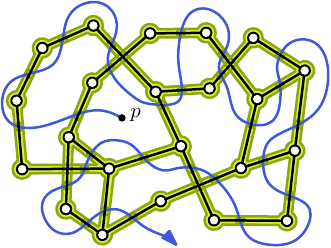} {general-wall} {Creating regions to follow $\Gamma$ and $T$.}
  
  \begin {proof}
If $T$ is self-intersecting, then we can rearrange the pieces between self-intersections to remove all self-intersections. Thus we assume that $T$ is a simple path. 
    
If $T$ crosses any edge of $\Gamma$ more than once, we insert an even number of extra vertices on that edge such that afterwards, every edge is crossed exactly once. 
 Let $\Gamma'$ be the resulting graph. Since we inserted an even number of vertices on every edge, finding a minimum vertex cover in $\Gamma'$ will give us a minimum vertex cover in $\Gamma$.
    
    Now, for each vertex $v$ in $\Gamma'$, we create one region $D_v$ in $\mathcal D$. This region consists of the point where $v$ is embedded, and the pieces of the edges adjacent to $v$ up to the point where they cross $T$.
    Figure~\ref {fig:general-regions} shows an example (the regions have been dilated by a small amount for visibility; if the embedding $\Gamma$ has enough room this does not interfere with the construction). Note that all regions are simply connected.
    
    Finally, we create one more special region $W$ in $\mathcal D$ that forms a corridor for $T$. Then $W$ is duplicated at least $n$ times to ensure that crossing this ``wall'' will always be more expensive than any other solution. Figure~\ref {fig:general-wall} shows this.
    
    Now, in order to escape, anyone starting at $p$ must roughly follow $T$ in order to not cross the wall. This means that for every edge of $\Gamma'$ that $T$ passes, one of the regions blocking the path (one of the vertices incident to the edge) must be disabled. The smallest number of regions to disable to achieve this corresponds to a minimum vertex cover in $\Gamma'$.
  \end {proof}
  
  Combining this result with Corollary~\ref{cor_np}, we obtain our first hardness result for the barrier resilience problem. 

\begin{theorem}
The barrier resilience problem for a collection of connected regions is NP-hard.
\end{theorem}
  
    
\subsection{Extension to fat regions} \label {app:hardfat}
  We now adapt the previous approach to also work for a much more restricted class of regions:
  axis-aligned rectangles of sizes $1 \times (1+\eps)$ and $(1+\eps) \times 1$ for any $\eps > 0$ (as long as $\eps$ depends polynomially on $n$).
For simplicity, we limit $\Gamma$ to have maximum degree 3. 
Maximum independent set is still known to be NP-hard in that case~\cite{gjs-ssnpcgp-76}, and making them tripartite does not change the maximum degree.

The idea of the reduction is the following.
We start from a sufficiently spacious (but polynomial) embedding of $\Gamma$,
as illustrated in Figure~\ref {fig:total-grid}.
On each edge we add a large even number of extra vertices. 
Each new vertex will be replaced by a rectangle, so every edge in $\Gamma$ will become a chain of overlapping rectangles, like the green rectangles in Figure~\ref {fig:total-graphrects}.
Therefore the first phase consists in replacing the embedding of $\Gamma$ by an equivalent embedding of rectangles. We call these rectangles \emph {graph rectangles} (green in the figures).
Some care must be taken in the placement of graph rectangles around degree-3 vertices and in crossings, so that
the rest of the construction can be made to work.
Next, we place \emph {wall rectangles} (orange in the figures; these consist of many copies of the same rectangle) across each graph rectangle. The gaps between adjacent wall rectangles should cover the overlapping part of two adjacent graph rectangles, so that a path can pass through them only whenever one of the two graph rectangles is removed.
Then, we find a curve $T$ from $p$ that goes through every gap exactly once (note that $T$ exists, by Lemma~\ref {lem:eulerpath}).
Figure~\ref {fig:total-path} illustrates this phase of the construction.
Finally, we add more wall rectangles around $T$, to force any potential minimum resilience path from $p$ that does not go through the wall rectangles to be homotopic to $T$.
Figure~\ref {fig:total-final} shows the final set of rectangles.
Now, computing an optimal resilience path among this set of rectangles would correspond to a maximal independent set in $\Gamma$.

\vierplaatjes [scale=.29] 
{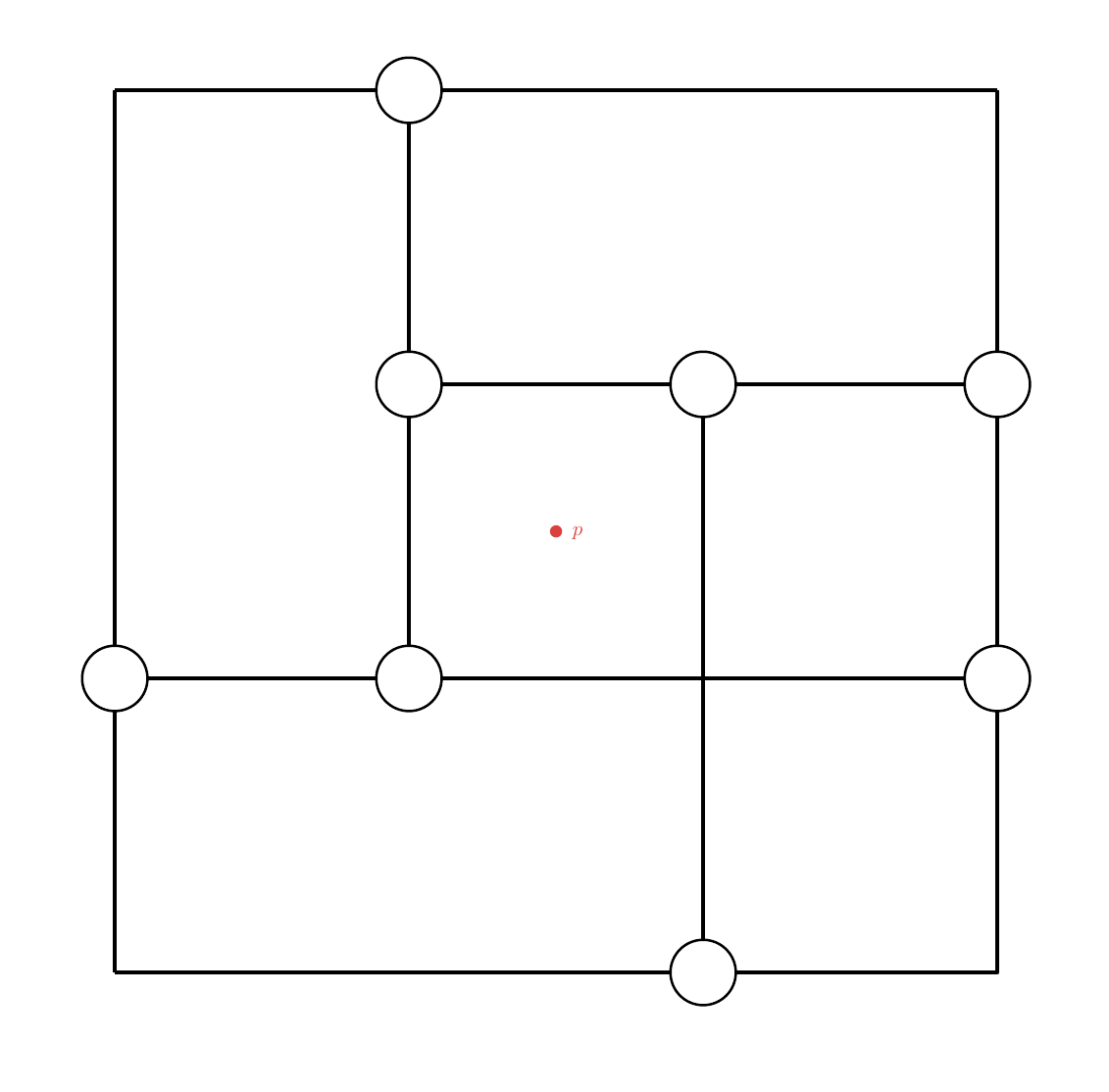} 
{total-graphrects} 
{total-path} 
{total-final} 
{(a) A non-planar oddly embedded cubic graph, embedded on a grid.
 (b) A set of rectangles, containing exactly one rectangle for each input vertex, and an even number of rectangles for each input edge. Note that crossings can be embedded if sufficiently far apart.
 (c) Local walls are added to make ``tunnels'', each tunnel contains the overlapping part of two adjacent yellow rectangles. To go through a tunnel, one of the two yellow rectangles has to be removed.
     Then we choose an Euler path from $p$ to the outside, that goes through each tunnel exactly once.
 (d) The final set of rectangles, designed to force any path from $p$ to the outside to be homotopically equal to the one we drew.
}

  \drieplaatjes [scale=.82] {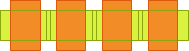} {rects-vertex} {rects-crossing} 
  { Local details of the construction. Note that we use rectangles with a large aspect ratio for visibility,
    but the same constructions can be made with aspect ratio arbitrarily close to $1$.  
    (a) Overlapping rectangles to create edges (with an even number of extra vertices).
    (b) A vertex of degree at most $3$, which is just a single rectangle. 
    (c) A crossing between two chains.
  }
  
For the construction to work, there needs to be enough space to place the wall rectangles. It is clear that this is possible far away from the graph rectangles, but close to the graph rectangles we proceed as follows: first, Figure~\ref {fig:rects-chainsimple} shows the placement of rectangles along an edge of $\Gamma$.
Figure~\ref {fig:rects-vertex} shows how to place the rectangles at degree-3 vertices.
Crossings are handled as shown in Figure~\ref {fig:rects-crossing}. 
These gadgets force some of the gaps in the chain to join each other. But this is no problem if every edge has enough rectangles. Also, note that at the center of the construction in Figure~\ref {fig:rects-crossing} there are two overlapping green rectangles, which belong to the two crossing chains. This is the only place where we vitally use the fact that the regions are not pseudodisks.

  \begin {lemma}
    Let $p$ be a given point in the plane, and $\Gamma$ an oddly embedded graph with maximum vertex degree $3$ (not necessarily planar) around $p$. Furthermore, let $T$ be a curve that forms an odd Euler path from $p$ to infinity. Then we can construct a set $\mathcal D$ of axis-aligned rectangles of aspect ratio $1 : (1 + \eps)$ such that a minimum set of regions from $\mathcal D$ to remove corresponds exactly to a minimum vertex cover in $\Gamma$.
  \end {lemma}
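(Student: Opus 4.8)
The plan is to follow the same strategy as in the proof of Lemma~\ref{lem:odd-independent-general}, but to replace every ingredient — the vertex regions $D_v$, the wall $W$, and the corridor around $T$ — by configurations of axis-aligned rectangles whose side lengths differ by a factor of at most $1+\eps$. First I would fix a grid embedding of $\Gamma$ that is ``spacious'': scale it up by a polynomial factor so that every vertex is far from every edge it is not incident to, and so that every crossing of $\Gamma$ is isolated (Fig.~\ref{fig:total-grid}). Since $\eps$ is allowed to depend polynomially on $n$, this scaling is harmless. Next, on each edge of $\Gamma$ I subdivide with a large even number of vertices; by Observation~\ref{obs:insert-even} this changes neither the maximum independent set nor the minimum vertex cover except by a fixed, known additive term $m/2$, and — routing as in Lemma~\ref{lem:eulerpath} — I may assume $T$ crosses every (subdivided) edge exactly once. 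Now each subdivision vertex becomes a single \emph{graph rectangle} of aspect ratio $1:(1+\eps)$, placed so that consecutive rectangles along an edge overlap in a small region straddled by $T$ (Fig.~\ref{fig:rects-chainsimple}); degree-$3$ vertices are handled by the gadget of Fig.~\ref{fig:rects-vertex}, and crossings by that of Fig.~\ref{fig:rects-crossing}, where the two passing chains each contribute one graph rectangle and these two overlap — the only place the construction truly relies on the regions not being pseudodisks. These gadgets may force a few of the overlap gaps along a chain to merge, but this is harmless provided each edge carries enough rectangles, which the large even subdivision guarantees.

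Then I would add the \emph{wall rectangles}. Across each graph rectangle I stack a short column of identical wall rectangles, arranged so that the only gaps left uncovered are exactly the overlap regions between consecutive graph rectangles; a path can cross such a gap only if one of the two graph rectangles bounding it has been removed. Each wall rectangle is duplicated $n+1$ times, so that crossing any wall costs strictly more than any candidate solution that does not. Finally, I route $T$ through all the gaps (it exists by Lemma~\ref{lem:eulerpath}) and line it with further duplicated wall rectangles, exactly as in Fig.~\ref{fig:general-wall}, so that any path from $p$ to infinity of resilience less than $n+1$ must be homotopic, rel endpoints in the complement of the walls, to $T$ (Fig.~\ref{fig:total-final}).

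For correctness I would argue as follows: an optimal solution never removes a wall rectangle, since all $n+1$ of its copies would have to be removed, so it removes only graph rectangles, and the induced escape path is homotopic to $T$ and therefore passes through every gap. Passing the gap between the two graph rectangles coming from vertices $u,v$ of the subdivided graph requires removing $D_u$ or $D_v$; since the gaps are in bijection with the edges of the subdivided graph, the set of removed graph rectangles is exactly a vertex cover of that graph. Conversely, removing the rectangles of any vertex cover opens up a path homotopic to $T$. Hence the minimum number of rectangles to remove equals the minimum vertex cover of the subdivided graph, which by Observation~\ref{obs:insert-even} equals the minimum vertex cover of $\Gamma$ plus the fixed term $m/2$.

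The step I expect to be the main obstacle is purely geometric: verifying that the degree-$3$ and crossing gadgets, together with the wall columns, can actually be realized with rectangles of aspect ratio arbitrarily close to $1$, and that there is enough room adjacent to the graph rectangles to fit the walls while leaving gaps only over the intended overlaps. The figures give the intended local pictures; the work is to check that a sufficiently large polynomial blow-up of the grid embedding leaves room for these finitely many gadget types, and that the merged gaps at vertices and crossings still each force covering exactly the corresponding edge(s) — in particular that no ``shortcut'' path can sneak between a graph rectangle and its wall column.
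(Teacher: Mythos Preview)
Your proposal is correct and follows essentially the same approach as the paper: subdivide each edge evenly (invoking Observation~\ref{obs:insert-even}), replace vertices by graph rectangles using the chain, degree-$3$, and crossing gadgets of Figs.~\ref{fig:rects-chainsimple}--\ref{fig:rects-crossing}, surround everything with heavily duplicated wall rectangles that force any cheap escape path to be homotopic to $T$, and read off the vertex-cover correspondence gap by gap. If anything, you spell out more of the correctness argument than the paper's own proof, which is quite terse and defers the geometric feasibility to the figures; the one detail the paper adds that you leave implicit is that $T$ may need to be rerouted locally near the crossing gadgets, which is absorbed by having sufficiently many subdivision rectangles on each edge.
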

    
  \begin {proof}
    We first add groups of extra vertices on every edge of $\Gamma$ so that we have room to place the rectangles, in an even number per edge. Then replace edges by chains as of rectangles as as in Figure~\ref {fig:rects-chainsimple+rects-vertex+rects-crossing}, and connect the orange (wall) rectangles to force the only optimal path from $p$ to the outer face to be along the Euler path $T$. 
The path may have to be rerouted locally close to the crossings, but since there is a sufficiently large number of crossings with every edge anyway, this is always possible.
 Orange rectangles have to be duplicated sufficiently many times again, to make sure that no optimal path will ever cross them.
  \end {proof}

  
\begin{theorem}
The barrier resilience problem for regions that are axis-aligned rectangles of aspect ratio $1 : (1 + \eps)$ is NP-hard.
\end{theorem}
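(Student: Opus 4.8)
The plan is to reduce from minimum vertex cover (equivalently, maximum independent set) on cubic graphs, which is NP-hard~\cite{gjs-ssnpcgp-76}, by chaining together the tools developed above. Given a cubic graph $G=(V,E)$, I would first transform it into a tripartite graph $G'$ of maximum degree $3$ whose minimum vertex cover differs from that of $G$ by a known additive amount; then invoke the lemma that every tripartite graph is oddly embeddable to get an odd embedding $\Gamma$ of $G'$ around a point $p$; then apply Lemma~\ref{lem:eulerpath} to obtain, in polynomial time, an odd Euler path $T$ from $p$ to the outer face; and finally feed $\Gamma$ and $T$ into the preceding lemma to obtain a set $\regions$ of axis-aligned rectangles of aspect ratio $1:(1+\eps)$ together with a target point $q$ in the outer face for which $\res(p,q)$ equals the minimum vertex cover of $\Gamma$ (up to an additive constant readable off the construction). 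An algorithm computing $\res(p,q)$ would then solve minimum vertex cover on cubic graphs, giving NP-hardness.

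For the graph transformation I would subdivide every edge of $G$ into exactly three pieces, replacing each edge $uv$ by a path $u\,x_{uv}\,y_{uv}\,v$ (orienting the edge arbitrarily to decide the roles of the two new vertices). This adds two vertices per edge, so the total number added is $m=2|E|$, an even number, and Observation~\ref{obs:insert-even} applies: the maximum independent set of the resulting graph $G'$ exceeds that of $G$ by exactly $|E|$, so its minimum vertex cover is $\mathrm{minVC}(G)+|E|$ and is NP-hard to compute. Moreover $G'$ has maximum degree $3$ (original vertices keep their degree, the new vertices have degree $2$), and it is tripartite: color every original vertex $0$, each $x_{uv}$ with $1$, and each $y_{uv}$ with $2$; since all edges of $G'$ lie inside the subdivided paths, which get colored $0\!-\!1\!-\!2\!-\!0$, this is a proper $3$-coloring. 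Thus minimum vertex cover is NP-hard on tripartite graphs of maximum degree $3$, precisely the class the preceding lemma handles.

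It remains to assemble the geometric instance and check that every step is polynomial. The triangle construction proving that tripartite graphs are oddly embeddable produces an explicit embedding of polynomial size; I would scale it up so there is enough room for the local gadgets of Fig.~\ref{fig:rects-chainsimple+rects-vertex+rects-crossing}, and add the required (even) groups of extra subdivision vertices on each edge, which by Observation~\ref{obs:insert-even} again leave the vertex cover unchanged. Lemma~\ref{lem:eulerpath} then gives $T$ in polynomial time. Plugging $\Gamma$ and $T$ into the preceding lemma yields, in polynomial time and of polynomial size (for $\eps$ polynomially bounded in $n$, as required in the construction), the set $\regions$ of $1:(1+\eps)$ rectangles described above. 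Composing the pipeline gives a polynomial-time many-one reduction from minimum vertex cover on cubic graphs to barrier resilience on such rectangles; for the decision version one simply carries the threshold $k$ through the reduction, or binary-searches over its polynomial range.

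The part needing the most care is the faithfulness and polynomiality of the geometric encoding rather than any conceptual difficulty: one must verify that the number of extra subdivision vertices required near degree-$3$ vertices and near crossings (so that the wall ``tunnels'' line up with the overlap of consecutive graph rectangles and $T$ can be locally rerouted around each crossing) is polynomial in $n$ and $1/\eps$, that duplicating each wall rectangle $\Theta(n)$ times indeed forces every optimal escape path to be homotopic to $T$, and that at the crossing gadget the two superimposed graph rectangles --- the one place where the construction genuinely exploits that the regions are not pseudodisks --- do not create spurious cheap paths. Given all the preceding lemmas this is careful bookkeeping, but it is where the reduction would fail if the embedding were too cramped.
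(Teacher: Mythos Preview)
Your proposal is correct and follows essentially the same route as the paper: reduce from maximum independent set / minimum vertex cover on degree-$3$ graphs, make the graph tripartite by even subdivision (the paper just remarks that ``making them tripartite does not change the maximum degree''; you spell out the $3$-coloring explicitly), oddly embed it, extract an odd Euler path via Lemma~\ref{lem:eulerpath}, and then invoke the preceding lemma to realize the instance with $1:(1+\eps)$ rectangles. The caveats you flag about polynomial size, wall duplication, and the crossing gadget are exactly the bookkeeping the paper's construction sketches address.
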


  A similar approach can likely be used to show NP-hardness of other classes of regions as well. However, it seems that a necessary property for our approach is that the regions are able to completely cross each other: in other words, the regions in $\mathcal D$ cannot be pseudodisks.\footnote{A similar fact was also observed in~\cite{tk-brsn-11}.} 
  

\paragraph{Acknowledgments}
The authors would like to thank some anonymous referees for their thorough check of a previous version of this document. 
M.K was  partially supported by the ELC project (MEXT KAKENHI No. 24106008). M.L. was supported by the Netherlands Organisation for Scientific Research (NWO) under grant 639.021.123.
R.I.~S. was partially supported by projects MINECO MTM2015-63791-R/FEDER and Gen. Cat. DGR 2014SGR46, and by MINECO through the Ram{\'o}n y Cajal program.

\bibliographystyle{abbrv} 
\bibliography{resilience}

\appendix\clearpage

\end{document}